\theoremstyle{definition}
\newtheorem{theorem}{Theorem}[section]
\newtheorem{main}{Main theorem}
\newtheorem{lemma}[theorem]{Lemma}
\newtheorem{proposition}[theorem]{Proposition}
\newtheorem{corollary}[theorem]{Corollary}
\newtheorem{remark}[theorem]{Remark}
\newtheorem{definition}[theorem]{Definition}
\begin{document}
\title{Notes on Feynman path integral-like methods of quantization
on Riemannian manifolds}
\author{Yoshihisa Miyanishi}
\maketitle
\begin{abstract}
We propose an alternative method for Feynman path integrals on compact Riemannian manifolds. 
Our method employs the action integral $S(t, x, y)$ along the shortest path between two points.  
The corresponding oscillatory integral operator is defined by  
\begin{align*}
U_{\chi} (t)f(x)\equiv \frac{1}{(2\pi i)^{n/2}}\int_{M} \chi(d(x,y)) 
\sqrt{V(t, x, y)} e^{iS(t,\ x,\ y)}f(y)\; dy.  
\end{align*}
where $\chi(d(x,y))$ is the bump cut-off function with small compact support 
and $V(t, x, y)$ denotes van Vleck determinant. 
In the case of  rank $1$ locally symmetric Riemannian manifolds, 
we prove the strong convergence of time slicing products $\lim\limits_{N\rightarrow \infty}\{U(t/N)\}^{N}$ 
for low energy functions. 
Moreover, the strong limit includes Dewitt curvature $R/6$, 
where $R$ denotes the scalar curvature of a Riemannian manifold. 
This is an alternative rigorous formulation for Feynman path integrals on Riemannian manifolds.
\end{abstract}
\section{Introduction} 
Let $(M, g)$ be a compact, oriented, smooth Riemannian $n$-dimensional manifold without boundary. 
We know that the injective radius of a compact manifold $\underbar{\it d}$ is always 
finite and positive \cite{Be}. 
It is also known that the geodesic distance function $d(x, y)$ is $C^{\infty}$ in a neighborhood 
of  $(x, y)\in M\times M$ if and only if  $x$ and $y$ are not conjugate points along 
this minimizing geodesic \cite{K-N}. 
Thus the smooth geodesic action $S(t, x, y)$ is represented 
as an integral over time, taken along the geodesic path between the initial time 
and the final time of the development of the system: 
$$ S(t,\ x,\ y)=\int_0^t \frac{1}{2}g_{x(t)}(\dot{x}(t), \dot{x}(t))\; dt 
=\frac{|d(x, y)|^2}{2t} \quad \mbox{for}\ d(x, y)<\underbar{\it d},$$ 
and the path-density between two points {\cite{Wa}} is given by van Vleck determinant: 
$$V(t,x,y)=g^{-1/2}(x)g^{-1/2}(y) \det _{ij}
\left(\frac{\partial^2 S(t,\ x,\ y)}
{{\partial {x_i}}{\partial {y_j}}}\right).
$$
We define a reasonable candidate for the short-time 
quantum propagators associated to $S$ and $V$ by a family of oscillatory integral operators.   
\par
\begin{definition}[Shortest path approximations on $M$]
The shortest path approximation 
$U_{\chi}(t)$ on $M$ is defined by 
\begin{align*}
U_{\chi} (t)f(x)\equiv \frac{1}{(2\pi i)^{n/2}}\int_{M} \chi(d(x,y)) 
\sqrt{V(t, x, y)} e^{iS(t,\ x,\ y)}f(y)\; dy,  
\end{align*}
where $\chi(d(x,y))$ is the bump function with compact support contained 
in $d(x, y)<\underbar{\it d}$.  
\end{definition}
It is emphasized that an approximate ``local" parametrix of  $e^{\frac{it\triangle}{2}}$ 
is based on the above form for extremely short time interval (See e.g. \cite{H-T-W}). 
For the purpose of local to global in time consistency, we also introduce   
Feynman path integral-like methods of quantization by the limit of time slicing products:  
\begin{equation}
\lim_{N\rightarrow \infty} \{U_{\chi}(t/N)\}^N. 
\end{equation}
This paper aims to give the meaning of this limit and to evaluate the limit if it converges. 
However, we have not accomplished this task for general compact manifolds. So, 
as the first observation we restrict ourselves to the class of rank $1$ locally symmetric 
Riemannian manifolds, including the $n$-dimensional sphere $S_n$ 
and hyperbolic manifolds etc. (See e.g. \cite{SH} and \S4). 
Our main result is the following: 
\begin{main}[Time slicing products and the strong limit (See \S4)]
Let $(M, g)$ be a compact, oriented, rank 1 locally symmetric Riemannian manifold. 
For $f(x) \in  L^2(M)$, 
$$
s\hspace{-1.5mm}\lim_{N\rightarrow \infty} \{U_{\chi}(t/N)\}^N \rho(N) f(x) 
=e^{\frac{it}{2}(\triangle-\frac{R}{6})} f(x) \quad\mbox{in}\ L^2. 
$$ 
where $R$ means the scalar curvature and 
$\rho(N)$ is a spectral measure defined by the spectral theorem : 
$-\triangle=\int_{0}^{\infty}E\ d\rho(E)$. 
\end{main} 
If $f(x)$ is a low energy function (i.e. a finite sum of eigenfunctions of $-\triangle$), 
the covergence of time slicing products is given without spectral projectors: 
\begin{corollary}
Under the hypothesis of Main Theorem,  
if $f(x)=\sum\limits_{{\rm finite}} u_j(x)$ is a finite sum of 
Laplace eigenfunctions, then  
$$
s\hspace{-1.5mm}\lim_{N\rightarrow \infty} \{U_{\chi}(t/N)\}^N f(x) 
=e^{\frac{it}{2}(\triangle-\frac{R}{6})} f(x) \quad\mbox{in}\ L^2. 
$$ 
\end{corollary}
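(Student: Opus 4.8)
The plan is to read the Corollary off the Main Theorem, the sole extra observation being that the spectral projector $\rho(N)$ acts as the identity on a vector of finite energy once $N$ is large. Write $f(x)=\sum_{{\rm finite}} u_j(x)$ with $-\triangle u_j=\lambda_j u_j$, and put $\Lambda=\max_j\lambda_j$, which is finite because the sum has only finitely many terms. From the spectral resolution $-\triangle=\int_0^\infty E\,d\rho(E)$, the operator $\rho(N)=\int_0^N d\rho(E)$ is the orthogonal projection onto the span of the eigenfunctions with eigenvalue at most $N$. Hence $\rho(N)u_j=u_j$ for each $j$ whenever $N>\Lambda$, so that $\rho(N)f=f$ for all $N>\Lambda$.

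First I would note that for each fixed $N$ the kernel $\chi(d(x,y))\sqrt{V(t/N,x,y)}$ is a bounded function supported in $\{d(x,y)<\underbar{\it d}\}$, so $U_\chi(t/N)$ is Hilbert--Schmidt and in particular bounded on $L^2(M)$; thus $\{U_\chi(t/N)\}^N f$ is a well-defined element of $L^2(M)$ and both limits in question are meaningful. Then, for every $N>\Lambda$, the two sequences coincide term by term,
\[
\{U_\chi(t/N)\}^N f=\{U_\chi(t/N)\}^N \rho(N) f,
\]
and two sequences in $L^2(M)$ that are eventually equal have the same strong limit when one of them converges. By the Main Theorem the right-hand side converges strongly to $e^{\frac{it}{2}(\triangle-\frac{R}{6})}f$, so the left-hand side converges to the same limit, which is the assertion. (Since $R$ is constant on a rank $1$ locally symmetric manifold, the right-hand operator is $e^{-\frac{itR}{12}}e^{\frac{it\triangle}{2}}$ and acts on $f$ by $\sum_j e^{-\frac{it}{2}(\lambda_j+R/6)}u_j$, so the limit is explicit.)

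I do not expect a genuine obstacle in the Corollary itself: the entire analytic burden --- the uniform control of the time-slicing error as the slice count and the energy cutoff tend to infinity together, the van Vleck asymptotics, and the appearance of the Dewitt term $R/6$ --- is already discharged by the Main Theorem. The only point that must be verified carefully is that the removal of $\rho(N)$ is \emph{exact} rather than merely asymptotic, and this holds precisely because a finite sum of eigenfunctions lies in a fixed finite-dimensional spectral subspace captured by $\rho(N)$ for all large $N$; in particular no uniform-in-$N$ bound on the iterated operators $\{U_\chi(t/N)\}^N$ is required.
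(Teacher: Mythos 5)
Your proposal is correct and matches the paper's (implicit) argument: the paper states this Corollary without a separate proof, as an immediate consequence of the Main Theorem, and the step that makes it immediate is exactly your observation that $\rho(N)f=f$ once $N$ exceeds the largest Laplace eigenvalue occurring in the finite sum, so that $\{U_{\chi}(t/N)\}^N f$ and $\{U_{\chi}(t/N)\}^N\rho(N)f$ coincide for all large $N$ and therefore share the same strong limit. Your side remarks (Hilbert--Schmidt boundedness of each $U_{\chi}(t/N)$, and the fact that no uniform-in-$N$ bound on the iterates is needed for this reduction) are correct but inessential.
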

This is an analogous result for Feynman path integral proposed 
by means of finite dimensional approximations and Trotter type time 
slicing products (See e.g. \cite{Fu 1},\cite{Fu 2}, \cite{Fu 3}, 
\cite{Fu-Tu}, \cite{I-W}, \cite{Ino}, \cite{Int}, \cite{Ki-KH}, 
\cite{KN}, \cite{Ya 1}). In these papers, the stationary action 
trajectories are finite for fixed time $t>0$ and $x, y\in \mathbf{R}^n$,  
and the kernel $E(t,x,y)$ of $e^{it( \frac{-\triangle}{2}+V(x))}$ 
are bounded smooth for small $t\not=0$. Thus time slicing products 
converge without spectral projectors.  
\par
On compact manifolds, however, infinite many 
action paths exist, even if time $t>0$ is fixed. 
To clarify the meaning of spectral projectors,
we consider the quantum evolution $e^{\frac{it}{2}(\triangle-\frac{R}{6})}$ on $M$. 
By Stone's theorem {(See e.g. \cite{RS})}, $e^{\frac{it}{2}(\triangle-\frac{R}{6})}$ 
are unitary operators and the kernel are given by  
$$
E(t,x,y)=\sum_{E_j}e^{\frac{-it{E_j}}{2}}\overline{u_j(x)}u_j(y)
$$
where $\{u_j(x)\}$ is eigenfunction expansion of $-\triangle+\frac{R}{6}$ 
and $E_j$ are eigenvalues.
The behavior of $E(t,x,y)$ is quite singular 
(See e.g. \cite{Ka}, \cite{Ni}, \cite{Ta 1}, \cite{Ta 2}, \cite{Ya 2}). 
Neverthless, when we sum a finite number of terms in $E$, $E_{finite}(t,x,y)$ are smooth 
and  we may intuitively choose classical shortest paths for low energy $E$. 
Accordingly we may define the heuristic approximation for 
Feynman path integration by $\{U_{\chi}(t/N)\}^N 
\rho({N^{1/\alpha-\varepsilon}})$. 
Indeed, uniform convergences are proven in \S 3:
\setcounter{section}{3}
\setcounter{theorem}{0}
\begin{proposition}
For $\alpha=2+\frac{1}{2}[\frac{n+2}{2}]$ and small $\varepsilon>0$,  
$$
\lim_{N\rightarrow \infty} \Vert [\{U_{\chi}(t/N)\}^N -e^{\frac{it}{2}(\triangle-\frac{R}{6})}] \rho(N^{1/\alpha-\varepsilon}) \Vert_{L^2}=0.  
$$ 
\end{proposition}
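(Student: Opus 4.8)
The plan is to reduce the Proposition to a \emph{one-step consistency estimate} for the oscillatory integral operator, combined with a telescoping (Chernoff--Trotter type) argument; the point will be that the spectral cut-off $\rho(N^{1/\alpha-\varepsilon})$ is exactly what absorbs both the non-unitarity of $U_\chi$ and the loss of energy powers in the short-time expansion. Write $A(s)=U_\chi(s)$ and $B(s)=e^{\frac{is}{2}(\triangle-\frac{R}{6})}$. On a rank $1$ locally symmetric manifold the scalar curvature $R$ is constant, so $\triangle-\frac{R}{6}$ is self-adjoint with the same eigenfunctions as $\triangle$; hence $B(s)$ is unitary, commutes with every $\rho(\lambda)$, and $B(t/N)^N=B(t)$. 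Thus $\{U_\chi(t/N)\}^N-e^{\frac{it}{2}(\triangle-\frac{R}{6})}=A(t/N)^N-B(t/N)^N$, and everything reduces to estimating this difference on the range of $\rho(N^{1/\alpha-\varepsilon})$.

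The conceptually central step is the short-time expansion of $A(s)$. Passing to geodesic normal coordinates $y=\exp_x v$, where $d(x,y)=|v|$ and $S(s,x,y)=|v|^2/(2s)$, and rescaling $v=\sqrt{s}\,w$, the integral defining $U_\chi(s)f(x)$ becomes a Gaussian oscillatory integral in $w$ with amplitude built from $\chi$, the van Vleck factor $\sqrt V$, and the Riemannian volume density; stationary phase about $w=0$ yields an asymptotic series in $s$. Expanding the volume density and $\sqrt V$ to second order in $v$ and computing the Gaussian moments produces, at first order in $s$, exactly the term $\frac{is}{2}(\triangle-\frac{R}{6})$: the $\triangle$ from the quadratic term in $f$ and the $-\frac{R}{6}$ from the interplay of the volume distortion $1-\frac16\mathrm{Ric}(v,v)+\cdots$ with the van Vleck factor. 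This is the DeWitt term and the reason the effective generator is $\triangle-\frac{R}{6}$. Recording the expansion as $A(s)=I+\frac{is}{2}(\triangle-\frac{R}{6})+s^2R_1(s)$, I would then prove the quantitative one-step bound $\|A(s)-B(s)\|_{H^{4+[\frac{n+2}{2}]}\to L^2}\le C\,s^2$. Since $A$ and $B$ share the terms $I$ and $\frac{is}{2}(\triangle-\frac{R}{6})$, their difference is $O(s^2)$ whose leading coefficient is a differential operator of order $4$ (accounting for $\lambda^2$), while controlling the stationary-phase remainder in the $L^2$ operator norm costs $[\frac{n+2}{2}]$ further derivatives through the embedding $H^{[\frac{n+2}{2}]}\hookrightarrow C^0$. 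Using $\|u\|_{H^k}\sim\lambda^{k/2}\|u\|_{L^2}$ for $-\triangle u=\lambda u$, these combine to
\[
\|(A(s)-B(s))\,\rho(\lambda)\|_{L^2\to L^2}\le C\,s^2\,\lambda^{\alpha},\qquad \alpha=2+\tfrac12\big[\tfrac{n+2}{2}\big],
\]
which is precisely where the exponent $\alpha$ originates.

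Next I would establish the global near-unitarity bound $\|U_\chi(s)\|_{L^2\to L^2}\le 1+Cs$ (a standard $TT^*$/stationary-phase estimate on the kernel of $U_\chi(s)U_\chi(s)^*$, whose phase is stationary near the diagonal), so that $\|A(s)^m\|\le(1+Cs)^m\le e^{Ct}$ whenever $ms\le t$. Then, with $\lambda_N=N^{1/\alpha-\varepsilon}$, the telescoping identity
\[
A(t/N)^N-B(t/N)^N=\sum_{k=0}^{N-1}A(t/N)^{N-1-k}\,(A(t/N)-B(t/N))\,B(t/N)^k ,
\]
multiplied on the right by $\rho(\lambda_N)$, is handled term by term: $B^k$ commutes with $\rho$ and is unitary, the middle factor obeys the one-step bound with $\lambda=\lambda_N$, and the leading powers of $A$ contribute at most $e^{Ct}$. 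Summing the $N$ terms gives
\[
\big\|(\{U_\chi(t/N)\}^N-e^{\frac{it}{2}(\triangle-\frac{R}{6})})\,\rho(\lambda_N)\big\|\le Ce^{Ct}\,N\Big(\tfrac{t}{N}\Big)^2\lambda_N^{\alpha}=Ce^{Ct}\,t^2\,N^{-\alpha\varepsilon}\longrightarrow 0 ,
\]
since $\lambda_N^{\alpha}=N^{1-\alpha\varepsilon}$, which is the assertion.

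The main obstacle is the sharp remainder estimate of the second step: the stationary-phase expansion of $U_\chi(s)$ must be controlled in the $L^2\to L^2$ norm with the \emph{exact} energy power $\lambda^{\alpha}$, because any larger power would force $\varepsilon<0$ and destroy the limit. Tracking this demands care with both the geometry (the curvature expansions of $\sqrt V$ and of the volume density entering $R_1(s)$) and the harmonic analysis (the Sobolev order $[\frac{n+2}{2}]$ needed to pass from kernel bounds to $L^2$ bounds). A secondary but essential subtlety is that $U_\chi$ does not commute with $\rho$, so the low-energy cut-off genuinely survives the iteration only because the telescoping is arranged with $\rho$ adjacent to the commuting unitaries $B(t/N)^k$, while the non-commuting powers $A(t/N)^{N-1-k}$ are absorbed by the global bound $\|U_\chi(s)\|\le 1+Cs$.
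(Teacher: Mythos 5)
Your overall skeleton (a one-step consistency bound of size $s^2\lambda^{\alpha}$, a telescoping sum, the cutoff $\lambda_N=N^{1/\alpha-\varepsilon}$, and the final arithmetic $N(t/N)^2\lambda_N^{\alpha}=t^2N^{-\alpha\varepsilon}\to 0$) reproduces the paper's quantitative content, and your one-step estimate is exactly the paper's inequality (4) (Proposition 2.6) composed with the spectral theorem. The genuine gap is the step you dismiss as standard: the near-unitarity bound $\Vert U_{\chi}(s)\Vert_{L^2\to L^2}\le 1+Cs$ \emph{without} any spectral projector. A generic $TT^*$/stationary-phase argument (this is the paper's Lemma 4.2, taken from Sogge) yields only $\Vert U_{\chi}(s)\Vert\le C$ for some constant with no reason to be close to $1$; with that alone your telescoping factor $\Vert A(t/N)^{N-1-k}\Vert$ is bounded only by $C^{N}$, and $C^{N}t^2N^{-\alpha\varepsilon}\to\infty$, so the whole argument collapses. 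To get the constant $1+O(s)$ one must show that the van Vleck normalization makes $U_{\chi}(s)^{*}U_{\chi}(s)$ equal, uniformly over \emph{all} frequencies, to a semiclassical quantization of $\chi^2$ plus an $O_{L^2}(s)$ remainder, and then invoke a sharp G\aa rding-type inequality; this is a genuine semiclassical FIO project, nowhere in the paper. Indeed the paper's only boundedness statement (Lemma 4.3 and the Main Theorem's proof) is $\Vert U_{\chi}(s)\rho(E)f\Vert\le (1+C_1 s+C_2 s^2(E+1))\Vert f\Vert$, i.e.\ always \emph{with} the projector. A further warning sign: if your bound held, uniform boundedness of $\{U_{\chi}(t/N)\}^N$ plus convergence on spectrally localized vectors would give strong convergence of $\{U_{\chi}(t/N)\}^N f$ for \emph{every} $f\in L^2$ with no projector at all --- strictly stronger than the paper's Main Theorem and Corollary 1.2, which carefully keep $\rho(N)$ or restrict to finite sums of eigenfunctions. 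That mismatch indicates your "secondary" lemma is in fact the crux.

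The paper avoids any norm bound on $U_{\chi}$ by exploiting the rank-$1$ locally symmetric hypothesis, which your argument never uses beyond constancy of $R$. On a two-point homogeneous space the kernel of $U_{\chi}(t)$ is a function of $d(x,y)$ alone, so $U_{\chi}(t)$ commutes with $\triangle$; hence $\tilde E(t)=U_{\chi}(t)-e^{\frac{it}{2}(\triangle-\frac{R}{6})}$ commutes with $\hat H=\triangle-\frac{R}{6}$ and with $(-\triangle+1)$. Writing $U_{\chi}(t/N)=e^{\frac{it\hat H}{2N}}(1+\tilde E(t/N))$ and expanding the $N$-th power \emph{binomially} (legitimate only because of this commutativity), the paper bounds the term with $k$ error factors by $(C/2)^k(t/N)^{2k}\Vert(-\triangle+1)^{k\alpha}\rho(E)f\Vert\le (C/2)^k(t/N)^{2k}(E+1)^{k\alpha}\Vert f\Vert$, and the inequality $\binom{N}{k}N^{-k}\le 1/k!$ turns the sum into $\exp\{C(E+1)^{\alpha}t^2/2N\}-1$, which vanishes when $E=N^{1/\alpha-\varepsilon}$. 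Thus only the projected one-step estimate is ever needed, and no bound on $U_{\chi}(s)$ by itself appears. The hypothesis you left unused is precisely what replaces your unproven near-unitarity estimate; to salvage your route you must either actually prove the $1+Cs$ bound (and thereby obtain a stronger theorem than the paper's), or fall back on the commutativity argument.
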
 
In \S 4 the strong convergence is assured by $L^2$ estimates, 
replacing $\rho(N^{1/\alpha-\varepsilon})$ with $\rho(N)$. 
Also notice \cite{Mi} that this convergence is not uniform without spectral projectors. 
\par 
Another way to understand the low energy is WKB method  
in which well-known $h$-small semiclassical calculus gives  
the low energy good parametrices of Schr\H{o}dinger operators 
(See for instance \cite[p.581]{BGT}, \cite{Ro}). 
So the low energy approximation is just a rewording WKB method 
in less $h$-small terminology. 
\par 
We end this introduction with some reasons why the amplitude is considered as van Vleck determinant.  
In physics literature, Feynman is saying in his book \cite{FH} that each trajectory 
contributes to the total amplitude to go from $a$ to $b$ and 
that they contribute equal to the amplitude, but contribute 
at different phases (i.e. The amplitude is constant in the original idea). 
Some researchers however propose  that the amplitude should be given by  
the density of trajectories \cite{Schulman}. Since 
the limit (1) converges to Schr\H{o}dinger operator $e^{it( \frac{-\triangle}{2}+V(x))}$ in the Euclidean case 
and the amplitude of its kernel reduces to the 
expression with van Vleck determinant (See e.g. \cite{Fu 3}). 
One more reason to consider van Vleck determinants is the accuracy of convergence (See \S 3). 
Thus it is natural to employ the amplitude as the density of trajectories.  
For the heat semi-group case, 
this ambiguity in the path integral is discussed from a strictly mathematical 
viewpoint \cite{And-Dri}. 
\setcounter{section}{1}
\setcounter{theorem}{2}
\section{Preliminaries}
\par
We start out by giving the properties of geodesic flows, van Vleck determinants and 
stationary phase methods on $M$. 
Throughout \S 2, we only assume $(M, g)$ is a compact, oriented, 
smooth Riemannian $n$-dimensional manifold without boundary. 
\par
Recall that the tangent space $T^*M$ admits a symplectic structure, which can be expressed locally as 
$\sum\limits_{i} dx^{i}\wedge dp_i$. Consider the smooth function defined on $T^{*} M$ by
$$
H(x, p)=\frac{1}{2} g^{ij}p_i p_j \equiv \sum_{i, j}\frac{1}{2} g^{ij}p_i p_j. 
$$
Here we often use Einstein summation convention as above, that is, 
in any expression containing subscripted variables appearing twice
(and only twice) in any term, the subscripted variables are assumed to
be summed over. The Hamitonian vector field of $H$ is 
$
X_H =\frac{\partial H}{\partial p_i}\frac{\partial}{\partial x^{i}}
-\frac{\partial H}{\partial x_i}\frac{\partial}{\partial p^{i}}
$
and its exponential map $\exp tX_{H}: T^{*}M\rightarrow T^{*}M$ is called the geodesic flow. 
It is well-known \cite{A-M} that the Legendre transform gives 
$p_i(t)=g_{ij}(t) {v^j}(t)\equiv g_{ij}(t) \dot{x^j}(t)$ for any geodesic $x(t)$ 
and the action integral is denoted by: 
$$ S(t,\ x,\ y)=\int_0^t \frac{1}{2}g_{x(t)}(\dot{x}(t), \dot{x}(t))\; dt 
=\frac{|d(x, y)|^2}{2t}, \quad \mbox{for}\ d(x, y)<\underbar{\it d}. $$ 
\par 
We also recall the definition of Riemann normal coordinates (See e.g. \cite{K-N} for details). 
To define a system of Riemann normal coordinates, one needs to pick a point $P$ 
on the manifold which will serve as origin and a basis for the tangent space at $P$.  
To any $n$-tuplet of real numbers $(x_1, \cdots, x_n)$, we shall assign a point $Q$ of 
the manifold by the following procedure: 
\par
Let $v$ be the vector whose components with respect to the basis chosen 
for the tangent space at $P$ are $x_1, \cdots, x_n$. 
There exists a unique affinely-parameterized geodesic $x(t)$ 
such that $x(0)=P$ and $[dx(t)/dt]_{t=0}=v$. Set $Q=x(1)$. Then $Q$ is defined 
to be the point whose Riemann normal coordinates are $(x_1, \cdots, x_n)$ for $d(P, Q)<\underbar{\it d}$.
Riemann normal coordinates enjoy several important properties: 
\par
1. The connection coefficients $\Gamma^{\alpha}_{\beta, \gamma}$ 
vanish at the origin of Riemannian normal coordinates.
\par
2. Covariant derivatives reduce to partial derivatives at the origin of Riemann normal coordinates.
\par
3. The partial derivatives of the components of the connection evaluated at the origin of Riemann normal coordinates equals the components of the curvature tensor.
\par
Under these circumstances, we consider the space initial data of the geodesic initial data problem 
on the tangent bundle $TM$. 
Equivalently the cotangent bundle $T^{*}M$ due to the existence of metric and Legendre transforms. 
Trading the dependence on initial momenta by the dependence on the final positions defines a 
map $T^*M\ \rightarrow\ M\times M$ whose Jacobian as will be shown below is the van Vleck 
determinant. From the derivation it will be explicit that is the inverse of Jacobian of the geodesic 
exponential map. 
\begin{lemma}
Let $d(x, y)<\underbar{\it d}$. For Riemann normal coordinates at $y$,  we have 
$$V(t, x, y)=t^{-n}  \{ \det(g_{i j}(x)) \}^{-1/2}.$$
\end{lemma}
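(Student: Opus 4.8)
The plan is to compute the mixed Hessian $\partial^2 S/\partial x_i\,\partial y_j$ directly in the normal coordinate chart centered at $y$, exploiting the two defining features of such coordinates: geodesics emanating from the origin are affine lines, and the metric coincides with the Euclidean one at the origin. Because the chart is pinned to $y$, I cannot naively differentiate the normal-coordinate expression of $S$ in $y$ (the chart itself moves with $y$). Instead I would route this through the Hamilton--Jacobi relation
$$\frac{\partial S}{\partial y_j}(x,y)=-(p_0)_j(x,y),$$
where $p_0$ is the initial covector of the minimizing geodesic running from $y$ at time $0$ to $x$ at time $t$ (in the Euclidean model $S=|x-y|^2/2t$ one checks $\partial S/\partial y_j=-(x_j-y_j)/t$, which is exactly $-(p_0)_j$). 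This identity holds in any chart and converts the forbidden $y$-derivative into a momentum I can evaluate explicitly.

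First I evaluate $p_0$ at the center $y$ of the chart. The geodesic from the origin reaching $x$ at time $t$ is the affine line $s\mapsto sv$ with $tv$ equal to the normal coordinates of $x$; hence $v=x/t$, exactly, for every $x$ inside the injectivity ball $d(x,y)<\underbar{\it d}$. Since $g_{jk}=\delta_{jk}$ at the origin, the initial covector is $(p_0)_j=g_{jk}(y)v^k=x_j/t$, again exactly and linearly in $x$. Therefore $\partial S/\partial y_j(x,y)=-x_j/t$ as a function of $x$ with $y$ frozen at the center, and differentiating once more in $x_i$ gives
$$\frac{\partial^2 S}{\partial x_i\,\partial y_j}=-\frac{\delta_{ij}}{t},$$
so that $\det_{ij}\!\left(\partial^2 S/\partial x_i\,\partial y_j\right)=(-1)^n t^{-n}$, constant in $x$.

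It then remains only to assemble the definition. Using $g(y)=\det g_{ij}(y)=1$ at the origin of the normal chart while $g(x)=\det g_{ij}(x)$ is a genuine function of $x$, I obtain
$$V(t,x,y)=g^{-1/2}(x)\,g^{-1/2}(y)\,\det_{ij}\!\left(\frac{\partial^2 S}{\partial x_i\,\partial y_j}\right)=(-1)^n\,t^{-n}\,\{\det g_{ij}(x)\}^{-1/2},$$
which is the claimed identity once the determinant is read with the standard van Vleck sign convention $\det(-\partial^2 S/\partial x\,\partial y)$ (equivalently, up to the harmless factor $(-1)^n$) that renders $V$ positive so that $\sqrt{V}$ is meaningful. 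This is the same content as the exponential-map Jacobian computation sketched just before the lemma: in normal coordinates the map $p_0\mapsto x$ is simply $x=t\,p_0$, with Jacobian $t^n$, and $V$ is its reciprocal dressed by the metric densities $g^{-1/2}(x)g^{-1/2}(y)$.

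The point worth flagging is conceptual rather than computational. One might expect curvature to enter the Hessian, but in normal coordinates centered at $y$ the mixed second derivative is \emph{exactly} $-\delta_{ij}/t$ with no curvature corrections; all of the geometry is carried by the single factor $g^{-1/2}(x)$, whose normal-coordinate expansion $\det g_{ij}(x)=1-\tfrac13 R_{ij}x^i x^j+\cdots$ is precisely where the scalar curvature will later surface in the main theorem. The only genuine care required is the one noted above, namely routing the $y$-derivative through the initial momentum rather than through the $y$-dependent chart, together with bookkeeping of the $(-1)^n$ determinant convention.
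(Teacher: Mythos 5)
Your proof is correct, and at its core it runs on the same two ingredients as the paper's: the Hamilton--Jacobi identification of the $y$-derivative of $S$ with (minus) the initial momentum, and the fact that Riemann normal coordinates centered at $y$ make the momentum-to-endpoint map exactly linear, $x = t\,p_0$. The packaging differs, though. The paper first derives, via the transformation of the Liouville measure $\wedge\, dx^i(0)\wedge dp_i(0)$ under $(x(0),p(0))\mapsto(x(0),x(1))$, an intermediate coordinate-free identity --- that $V^{-1}$ is the metric Jacobian of the geodesic exponential map --- and only then specializes to normal coordinates, after reducing to $t=1$ by a scaling argument; you instead compute the mixed Hessian $\partial^2 S/\partial x_i\partial y_j=-\delta_{ij}/t$ directly in the chart, for general $t$, which is shorter and makes transparent that the formula holds exactly for all $x$ in the injectivity ball, not merely near $y$. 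What the paper's detour buys is the invariant statement $V^{-1}=\det_g(d\exp_y)$, which supports the ``density of paths'' interpretation invoked elsewhere in the paper; what your route buys is economy and an honest treatment of the $(-1)^n$ sign, which the paper silently absorbs by writing $p_i(0)=\partial S/\partial x^i(0)$ without the minus sign (i.e.\ the usual van Vleck convention $\det(-\partial^2 S/\partial x\,\partial y)$). The two arguments are mathematically equivalent, with yours the more elementary presentation.
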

\begin{proof}
Remarking that the time $t$ means the scaling of $S(t, x, y)$,  
we may only prove the theorem for the case of $t=1$.  
The Liouville measure on $T^*M$ is given by : 
$$
d\mu_L=\wedge dx^i(0) \wedge dp_i(0). 
$$
The Hamilton-Jacobi function is equal to the action integral along the geodesic between the 
initial point and final points: 
$$
S(1, x, y)=S(1, x(1), x(0))=\frac{1}{2}\int_{x(0), t=0}^{x(1), t=1} dt \sum_{i, j}g_{ij}(x(t))\dot{x}^i(t)\dot{x}^j(t).
$$
$S(t, x, y)$ satisfies the Hamilton-Jacobi equation: 
$$
\frac{1}{2}\sum_{i, j}g^{ij}(x(0))\frac{\partial S}{\partial x^i(0)}\; \frac{\partial S}{\partial x^j(0)}=E, 
$$
where 
$
E=\frac{1}{2}\sum_{i, j}g_{ij}(x(t))\dot{p}^i(t)\dot{p}^j(t)
$
is the invariant energy. 
The initial momentum along the geodesic can be derived from Hamilton-Jacobi function by: 
$$
p_i(0)=\frac{\partial S}{\partial x^i(0)}. 
$$
Thus the transformed Liouville measure on $M\times M$ is given by: 
\begin{align*}
\wedge_i dx^i(0) \wedge_i dp_i(0)&=\det (\frac{\partial p(0)}{\partial x(1)}) \wedge_i dx^i(0) \wedge_i dx^i (1) \\
&=\frac{\det(\frac{\partial^2 S}{\partial x^i(0) \partial x^j(1)})}{\sqrt{\det g(x(0)) \det g(x(1))}} 
\sqrt{\det g(x(0))} \wedge_i dx^i(0) \sqrt{\det g(x(1))}  \wedge_i  dx^i (1) \\
&=V(1, x(1), x(0)) \sqrt{\det g(x(0))} \wedge_i dx^i(0) \sqrt{\det g(x(1))}  \wedge_i  dx^i (1) 
\end{align*}
This leads to the alternative expression of  van Vleck determinant:  
$$V(1, x, y)={\sqrt{\det g(x(0))}}^{\; -1} {\sqrt{\det g(x(1))}}^{\; -1}  
\det(\frac{\partial p(0)}{\partial x(1)}). $$ 
Now by definition: 
$$
x(1)=\exp^{1}_{x(0)}(v(0))
$$
where the velocity: $v(0) \in T_{x(0)}$ is given by the Legendre transform: 
$$
v^{i}(0)=\sum_{j} g^{ij}(x(0)) p_j(0). 
$$
So we have  
\begin{align*}
 \sqrt{\det g (x(1))}&=\det_g (d \exp^{1}_{x(0)}) \\
&={\sqrt{\det g(x(0))}}^{\; -1} {\sqrt{\det g(x(1))}} 
\det(\frac{\partial x(1)}{\partial v(0)}) \\
&={\sqrt{\det g(x(0))}}^{\; -1} {\sqrt{\det g(x(1))}}\det g(x(0)) \det(\frac{\partial x(1)}{\partial p(0)}) \\
&=V(1, x(1), x(0))^{-1}.
\end{align*}
\end{proof}
Next, we provide calculations of the kernel of $U_{\chi}(t)$. In the polar normal coordinates around $y$, the Laplacian looks like 
$$
\triangle f(x) =\frac{\partial ^2 f}{\partial r^2} (x)+H(x, y)\frac{\partial f}{\partial r}(x) +\frac{1}{r^2} \triangle_{S} f(x)
$$
where $r=d(y, x)$, $S$ is the geodesic sphere of radius $r$ centered at $y$, $\triangle_S$ is 
the Laplacian to $S$ and $H(x, y)$ is the total mean curvature at $x$ of $S$ (See e.g. \cite{K-N}). 
\par
Letting  $\widehat K(t, x, y)=\sqrt{V(t, x, y)} e^{iS(t,\ x,\ y)}
=a(t, r) \exp\left(\frac{ir^2}{2t}\right)$,  
\begin{align*} 
\triangle \widehat K(t, x, y)
&=\frac{\partial ^2 }{\partial r^2}\{{a(t, r)} e^{i\frac{r^2}{2t}}\} 
+H(x, y)\frac{\partial }{\partial r}\{{a(t, r)} e^{i\frac{r^2}{2t}}\} \\
&=\left\{ \frac{\partial ^2 a}{\partial r^2}{(t, r)} +\frac{2ir}{t}\frac{\partial a}{\partial r}{(t, r)}
   -\frac{r^2a(t, r)}{t^2}+\frac{ia(t, r)}{t} +H(x, y) \frac{\partial a}{\partial r}{(t, r)}+\frac{ir a(t, r) H(x, y)}{t} 
\right\}e^{i\frac{r^2}{2t}}, \\ 
i\frac{\partial}{\partial t}\widehat K(t,x,y)&=i \frac{\partial}{\partial t}\{{a(t, r)} e^{i\frac{r^2}{2t}}\}  
=\left\{ i \frac{\partial a}{\partial t}{(t, r)} +\frac{r^2 a(t, r)}{2t^2} \right\}e^{i\frac{r^2}{2t}}.
\end{align*}
Summarizing the calculations, we have
\begin{align*}
\left\{ i\frac{\partial}{\partial t}+\frac{1}{2}\triangle_x \right\} \widehat K(t, x, y)
&=\left\{ \frac{1}{2}\frac{\partial ^2 a}{\partial r^2}{(t, r)} +\frac{H(x, y)}{2} \frac{\partial a}{\partial r}{(t, r)}
  \right\}e^{i\frac{r^2}{2t}} \\
&+i \left\{ \frac{a}{2t}(1+r H(x, y))+\frac{r}{t}\frac{\partial a}{\partial r}{(t, r)}+ \frac{\partial a}{\partial t}{(t, r)}
   \right\} e^{i\frac{r^2}{2t}}.
\end{align*}
Recall that $a=a(t, r)$ satisfies the transport equation (See e.g. \cite[\S7.3]{Go}): 
$$\frac{\partial a}{\partial t}+\nabla a \cdot \nabla S+\frac{a}{2}\triangle S 
=\frac{\partial a}{\partial t}+\frac{r}{t} \frac{\partial a}{\partial r}+\frac{a}{2t}(1+r H(x, y)) =0. 
$$
It follows that 
\begin{align*}
\left(i\frac{\partial}{\partial t}+\frac{1}{2}\triangle_x \right) \widehat K(t, x, y)
&=\left\{ \frac{1}{2}\frac{\partial ^2 a}{\partial r^2}{(t, r)} +\frac{H(x, y)}{2} \frac{\partial a}{\partial r}{(t, r)}
  \right\} e^{i\frac{r^2}{2t}} \\
&=\left\{\frac{1}{2}\triangle_x a(t, r) \right\} e^{i\frac{r^2}{2t}}. 
\end{align*}
\begin{lemma}
$\left[ \frac{1}{2}\frac{\partial ^2 a}{\partial r^2}{(t, r)} +\frac{H(x)}{2} \frac{\partial a}{\partial r}{(t, r)}
  \right]|_{r=0}=\frac{1}{2} \triangle_x {a(t, x, y)}|_{x=y} = \frac{1}{2t^{n/2}}\cdot \frac{R(y)}{6}$
\end{lemma}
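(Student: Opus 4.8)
The plan is to trace the constant $R/6$ back to the explicit amplitude furnished by Lemma 1.4 together with the Taylor expansion of the volume density in Riemann normal coordinates. Since $a(t,r)=\sqrt{V(t,x,y)}$ by construction, Lemma 1.4 gives, in normal coordinates centered at $y$,
\[
a(t,r)=t^{-n/2}\{\det g_{ij}(x)\}^{-1/4}.
\]
The first equality in the statement is then just the polar form of the Laplacian recorded above this lemma: writing $\triangle_x a=\partial_r^2 a+H\,\partial_r a+r^{-2}\triangle_S a$ and using that $a$ is radial (the volume density of a rank $1$ locally symmetric space depends only on $r$, so $\triangle_S a\equiv 0$), the bracketed radial expression coincides with $\tfrac12\triangle_x a$. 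Hence the whole content of the lemma reduces to evaluating $\tfrac12\triangle_x a$ at $x=y$.

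For that evaluation I would insert the standard normal-coordinate expansion of the metric,
\[
g_{ij}(x)=\delta_{ij}-\tfrac13 R_{ikjl}\,x^k x^l+O(|x|^3),
\]
which rests on exactly the three properties of Riemann normal coordinates listed in \S2 (vanishing connection at the origin, reduction of covariant to partial derivatives, and the identification of the derivatives of the connection with the curvature). Taking determinants and contracting the curvature to the Ricci tensor $R_{kl}$ yields $\det g_{ij}(x)=1-\tfrac13 R_{kl}x^k x^l+O(|x|^3)$, whence
\[
a(t,x)=t^{-n/2}\{\det g_{ij}(x)\}^{-1/4}
= t^{-n/2}\Big(1+\tfrac1{12}R_{kl}\,x^k x^l+O(|x|^3)\Big).
\]
At the origin the connection coefficients vanish and covariant derivatives become partial derivatives, so $\triangle_x a|_{x=y}=\sum_i\partial_i^2 a|_{x=0}$; applying $\sum_i\partial_i^2$ to the quadratic term gives $\sum_i\partial_i^2(R_{kl}x^k x^l)=2\sum_i R_{ii}=2R(y)$, the cubic remainder contributing nothing at $x=0$. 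Therefore $\triangle_x a|_{x=y}=t^{-n/2}R(y)/6$, and halving produces the claimed $\tfrac{1}{2t^{n/2}}\cdot\tfrac{R(y)}{6}$.

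The step I expect to demand the most care is the curvature bookkeeping in the metric expansion: one must carry the exact numerical coefficient ($\tfrac13$ in $\det g$, equivalently $\tfrac16$ in $\sqrt{\det g}$) and confirm that the Ricci contraction collapses to the scalar curvature with the right normalization, since the target is precisely the DeWitt constant $R/6$ and any slip by a factor would corrupt the Main Theorem. A secondary subtlety is the passage from the full Laplacian to the radial expression: this is immediate for radial $a$, but for a general manifold $R_{kl}x^k x^l=r^2 R_{kl}\omega^k\omega^l$ carries angular dependence and $r^{-2}\triangle_S a$ need not vanish as $r\to0$, so the identification of the two sides genuinely uses the Einstein property $R_{kl}=\tfrac{R}{n}g_{kl}$ of rank $1$ locally symmetric spaces (or an angular average), which should be stated explicitly.
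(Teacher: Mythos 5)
Your proof is correct, and its core computation coincides with the paper's own: expand $g_{ij}$ in Riemann normal coordinates at $y$, deduce from Lemma 2.1 that $a(t,x,y)=t^{-n/2}\{\det g_{ij}(x)\}^{-1/4}=t^{-n/2}\bigl(1+\tfrac{1}{12}R_{kl}x^k x^l+O(|x|^3)\bigr)$, and apply the flat Laplacian at the origin to get $\tfrac{1}{12}\cdot 2R_{ii}=R/6$, with the cubic remainder dying at $x=0$; your coefficient bookkeeping matches the paper exactly. Where you genuinely go beyond the paper is the first equality of the statement. The paper's proof never addresses it: it is inherited from the displayed computation of $\bigl(i\partial_t+\tfrac12\triangle_x\bigr)\widehat K$ just before the lemma, where the amplitude was already written as $a(t,r)$, i.e.\ radiality of $\sqrt{V}$ was assumed tacitly. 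You justify it via $\triangle_S a\equiv 0$ for radial $a$, and, more importantly, you correctly flag that this reduction is not free on a general compact manifold: $\{\det g_{ij}(x)\}^{-1/4}$ carries angular dependence at order $r^2$ through the trace-free part $T_{kl}$ of the Ricci tensor, the bracket $\bigl[\tfrac12\partial_r^2 a+\tfrac{H}{2}\partial_r a\bigr]\big|_{r=0}$ computed along a fixed direction $\omega$ tends to $\tfrac{1}{2t^{n/2}}\bigl(\tfrac{R}{6}+\tfrac{n}{6}T_{kl}\omega^k\omega^l\bigr)$, and only after adding the $r^{-2}\triangle_S$ contribution does one recover the direction-independent value $\tfrac{1}{2t^{n/2}}\cdot\tfrac{R}{6}$. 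So the first equality genuinely uses that rank $1$ locally symmetric (two-point homogeneous) spaces are Einstein, equivalently that $\sqrt{V}$ depends only on $d(x,y)$ --- a fact the paper only records at the start of \S 3, even though \S 2 claims to work on an arbitrary compact $M$. Your version is therefore slightly more careful than the paper's at precisely the point where that stated generality is not actually available; the middle quantity $\tfrac12\triangle_x a|_{x=y}=\tfrac{1}{2t^{n/2}}\cdot\tfrac{R(y)}{6}$ is what survives in general.
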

\begin{proof}
Taking Riemann normal coordinates at $y$, it is known \cite{K-N} that 
$$
\begin{cases}
&g_{ij}(x)=\delta_{ij}-\frac{1}{3}R_{ikjl}(y) x^k x^l+O(|x|^3),  \vspace{2mm}\\
&\sqrt{\det g(x)}=1-\frac{1}{6}R_{ij}(y) x^i x^j+O(|x^3|). 
\end{cases}
$$
Here $R_{ikjl}$ and $R_{ij}$ denote the purely covariant version of Riemann curvature tensor and 
the Ricci curvature tensor. From Lemma 2.1, 
$$
a(t, x, y)=\sqrt{V(t, x, y)}=t^{-n/2} \{ \det(g_{i j}(x)) \}^{-1/4}. 
$$
Remarking that $\triangle=\sum_{i}\frac{\partial^2}{\partial (x^i) ^2}$ at $x=y$,  
\begin{align*}
\triangle_x {a(t, x, y)}|_{x=y}&=t^{-n/2} \triangle_x {\{ \det(g(x)) \}^{-1/4}}|_{x=y}\\
                                   &=t^{-n/2} \triangle_x \left(1+\frac{1}{12}R_{ij}(y) x^i x^j+o(|x^2|)\right)|_{x=y} \\
                                   &=t^{-n/2}\cdot \frac{R(y)}{6}   
\end{align*}
as desired. 
\end{proof}
\par
For $\widehat K(t, x, y) \equiv \chi(d(x, y))K(t, x, y)=\chi(d(x, y)) a(t, x, y) e^{i \frac{d(x, y)^2}{2t}}$, 
we obtain  
\begin{align*}
&\left(i\frac{\partial}{\partial t}+\frac{1}{2}\triangle_x \right)
(\chi K(t,x,y)) \\
&=\chi \left\{\frac{1}{2}\triangle_x a(t, x, y) \right\} e^{i\frac{d^2}{2t}}+\frac{1}{2}(\triangle_x \chi)K(t, x, y)
+\frac{1}{2}\left(2\nabla_x \chi \cdot \nabla_x K(t,x,y)\right) 
\end{align*}
where $\nabla_x$ is gradient with respect to $x$. 
Seeing this, we define the error integral $E_{\chi_1}(t)$ and $E_{\chi_2}(t)$ by 
$$
\begin{cases} 
&E_{\chi_1}(t) f(x) \equiv \frac{1}{(2\pi i)^{n/2}}\int_{M}
\big[ \chi \left\{\frac{1}{2}\triangle_x a(t, x, y) \right\} 
+\frac{1}{2}(\triangle_x \chi) a(t, x, y)  \big] e^{i\frac{d^2}{2t}}  f(y)\; dy, \\
&E_{\chi_2}(t) f(x) \equiv \frac{1}{(2\pi i)^{n/2}}\int_{M} 
\nabla_x \chi \cdot \nabla_x K(t,x,y) f(y)\; dy. 
\end{cases}
$$
From Lemma 2.2, $\big[\chi \left\{\frac{1}{2}\triangle_x a(t, x, y) \right\}
+\frac{1}{2}(\triangle_x \chi) a(t, x, y)  \big]|_{d=0}=\frac{R(x)}{12t^{n/2}}$. 
In order to estimate $U_{\chi}(t)$, $E_{\chi_1}(t)$ and $E_{\chi_2}(t)$, 
we state the method of stationary phase where $S(t,x,y)$ is a quadratic form,  
which is convenient here (See \cite[Lemma 7.7.3]{Ho}). 
\begin{lemma}
Let $A$ be a symmetric non-degenerate matrix with ${\rm Im} A \geqq 0$. 
Then we have for every integer $k>0$ and integer $s>n/2$
\begin{align*}
\Big| \int_{\mathbf{R}^n} u(x) e^{\frac{i <Ax, x>}{2t}}\; dx
-(\det(& A/2\pi i t))^{-\frac{1}{2}}
\sum_{j=0}^{k-1}(-it/2)^j \langle A^{-1}D, D \rangle^j u(0)/j! \Big| 
\\
&\leqq C_k (\Vert A^{-1} \Vert t)^{n/2+k} 
\sum_{|\alpha|\leqq 2k+s} \Vert D^{\alpha}u \Vert_{L^2(\mathbf{R}^n)}, 
\quad \text{for}\ \ u(x)\in \mathcal{S}(\mathbf{R}^n) .
\end{align*}
\end{lemma}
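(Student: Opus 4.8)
The plan is first to dispose of the parameter $t$ and then to argue on the Fourier side. Writing $\langle Ax,x\rangle/(2t)=\langle (A/t)x,x\rangle/2$, I would replace $A$ by $A/t$ throughout a normalized ($t=1$) estimate: since $\Vert (A/t)^{-1}\Vert=t\Vert A^{-1}\Vert$, $\det\big((A/t)/2\pi i\big)=\det(A/2\pi i t)$, and $\langle (A/t)^{-1}D,D\rangle^{j}=t^{j}\langle A^{-1}D,D\rangle^{j}$, the substitution turns the normalized expansion coefficients into the stated $(-it/2)^{j}\langle A^{-1}D,D\rangle^{j}u(0)/j!$ and turns the normalized error power $\Vert A^{-1}\Vert^{n/2+k}$ into $(\Vert A^{-1}\Vert t)^{n/2+k}$. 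Hence it suffices to prove the estimate for $t=1$ with $A$ symmetric, nondegenerate and $\mathrm{Im}\,A\ge 0$.

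For the normalized estimate I would pass to the Fourier transform. The key ingredient is the Gaussian Fourier identity
\[
\int_{\mathbf{R}^n} e^{i\langle Ax,x\rangle/2+i\langle x,\xi\rangle}\,dx=\det(A/2\pi i)^{-1/2}\,e^{-i\langle A^{-1}\xi,\xi\rangle/2},
\]
with the branch of the square root fixed by continuity. I would establish this first for $A=iB$ with $B$ real symmetric positive definite (an absolutely convergent Gaussian, by diagonalization), then continue analytically over the tube $\{\mathrm{Im}\,A>0\}$, and finally pass to the boundary $\mathrm{Im}\,A\downarrow 0$ with $A$ nondegenerate, the left-hand integral being taken in the regularized oscillatory sense. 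Inserting Fourier inversion for $u$ and applying this identity gives
\[
\int u(x)e^{i\langle Ax,x\rangle/2}\,dx=(2\pi)^{-n}\det(A/2\pi i)^{-1/2}\int \widehat{u}(\xi)\,e^{-i\langle A^{-1}\xi,\xi\rangle/2}\,d\xi.
\]

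Next I would insert the Taylor expansion with integral remainder of $z\mapsto e^{z}$ at $z=-i\langle A^{-1}\xi,\xi\rangle/2$. Under $(2\pi)^{-n}\int\widehat u(\xi)(\cdot)\,d\xi$ the polynomial term $\langle A^{-1}\xi,\xi\rangle^{j}$ becomes $\langle A^{-1}D,D\rangle^{j}u(0)$, reproducing exactly the finite sum in the statement. For the remainder the crucial point is that $\mathrm{Im}\,A\ge 0$ forces $\mathrm{Im}(A^{-1})\le 0$, so that $\mathrm{Re}\,z=\tfrac12\,\mathrm{Im}\langle A^{-1}\xi,\xi\rangle\le 0$ and $|e^{\theta z}|\le 1$ along the path; this yields the pointwise bound $|R_k(\xi)|\le (\Vert A^{-1}\Vert\,|\xi|^{2}/2)^{k}/k!$. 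It then remains to estimate $\int|\widehat u(\xi)|\,|\xi|^{2k}\,d\xi$ by Cauchy--Schwarz against the weight $(1+|\xi|^2)^{-s/2}$: bounding $|\xi|^{2k}\le(1+|\xi|^2)^{k}$ and splitting $(1+|\xi|^2)^{k}=(1+|\xi|^2)^{(2k+s)/2}(1+|\xi|^2)^{-s/2}$ produces the product of $\big(\sum_{|\alpha|\le 2k+s}\Vert D^\alpha u\Vert_{L^2}^{2}\big)^{1/2}$ and $\big(\int(1+|\xi|^2)^{-s}\,d\xi\big)^{1/2}$, the second factor being finite precisely because $s>n/2$. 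Absorbing the prefactor $|\det(A/2\pi i)|^{-1/2}\le(2\pi)^{n/2}\Vert A^{-1}\Vert^{n/2}$ together with the $\Vert A^{-1}\Vert^{k}$ from the remainder gives the power $\Vert A^{-1}\Vert^{n/2+k}$, which is the normalized estimate.

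The genuinely delicate step is the Gaussian Fourier identity for a complex matrix whose imaginary part is only semidefinite: pinning down the correct holomorphic branch of $\det(A/2\pi i)^{-1/2}$ and justifying the limit $\mathrm{Im}\,A\downarrow 0$ is where all the care is needed, while the Taylor expansion and the Cauchy--Schwarz estimate are routine once that identity and the sign relation $\mathrm{Im}(A^{-1})\le 0$ are in hand.
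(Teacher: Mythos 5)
The paper does not prove this lemma at all: it is quoted verbatim from H\"ormander (\cite[Lemma 7.7.3]{Ho}, i.e.\ the sharp form of Theorem 7.6.1 there), so there is no internal proof to compare against. Your sketch is correct and is essentially H\"ormander's own argument — reduction to $t=1$ by replacing $A$ with $A/t$, the Gaussian Fourier identity extended by analytic continuation from $\mathrm{Im}\,A>0$, Taylor expansion of $e^{-i\langle A^{-1}\xi,\xi\rangle/2}$ with the remainder controlled via $\mathrm{Im}\langle A^{-1}\xi,\xi\rangle\leqq 0$, and Cauchy--Schwarz against $(1+|\xi|^{2})^{-s/2}$ using $s>n/2$ — so it faithfully reconstructs exactly the proof the paper is implicitly invoking.
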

The right hand side in the above lemma is just the Sobolev norm :   
$$\Vert \cdot \Vert_{H^{2k+s}(\Omega)}
=\sum_{|\alpha|\leqq 2k+s} \Vert D^{\alpha} \cdot \Vert_{L^2 (\Omega)}.$$   
Letting $A=I$ (unit matrix), we obtain stationary phase lemma 
in the polar coordinate system of ${\mathbf{R}}^n$: 
\begin{corollary} Let $\chi(r)\in C_0^{\infty}(\mathbf{R}) $ be 
the bump function 
with compact support contained in $|r|<\underline{\it d}$. 
Then we have for every integer $k>0$ and integer $s>n/2$
\begin{align*}
\Big| \int_{S^{n-1}} \int_{\mathbf{R}} \chi(r) \ u(r,\theta) 
e^{\frac{i r^2}{2t}}g_{st}(r, \theta)\; dr &d\theta
-( 2\pi i t)^{n/2}\sum_{j=0}^{k-1}(it\triangle_{flat}/2)^j  u(0)/j! \Big| 
\\
&\leqq \tilde C_k t^{n/2+k} \Vert \chi u \Vert_{H^{2k+s}(\Omega_{\underline{\it d}})}
\quad \text{for}\ \ u(r, \theta)\in {C}^{\infty}(\mathbf{R}^n, \mathbf{C}), 
\end{align*}
where $g_{st}(r, \theta) d\theta$ denotes the spherical volume form on $S^{n-1}(r)$ and 
$\Omega_{\underbar{\it d}}=\{ x \in{\mathbf{R}^n}\ |\ r=|x|<\underbar{\it d} \}$.  
\end{corollary}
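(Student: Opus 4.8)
The plan is to deduce this from Lemma 2.6 by a change to Cartesian coordinates. Writing $x = r\theta \in \mathbf{R}^n$ with $r = |x|$ and $\theta \in S^{n-1}$, the spherical volume form satisfies $g_{st}(r,\theta)\, dr\, d\theta = dx$, so the oscillatory integral on the left equals $\int_{\mathbf{R}^n} \chi(|x|)\, u(x)\, e^{i|x|^2/(2t)}\, dx$, where I now regard $\chi(|\cdot|)$ and $u$ as functions on $\mathbf{R}^n$. Since $\chi$ has compact support in $\{|r| < \underline{\it d}\}$, the product $v := \chi u$ is smooth with compact support, hence lies in $\mathcal{S}(\mathbf{R}^n)$; this is what allows me to invoke Lemma 2.6, even though $u$ itself need only be $C^\infty$.

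Next I would specialize Lemma 2.6 to the quadratic phase $\langle A x, x\rangle$ with $A = I$, so that the phase there is exactly $|x|^2/(2t)$. Three identifications then complete the dictionary. First, $\det(A/2\pi i t)^{-1/2} = \big((2\pi i t)^{-n}\big)^{-1/2} = (2\pi i t)^{n/2}$, which produces the prefactor of the sum. Second, $\Vert A^{-1}\Vert = 1$, so the error prefactor $(\Vert A^{-1}\Vert t)^{n/2+k}$ collapses to $t^{n/2+k}$. Third, $\langle A^{-1} D, D\rangle = \langle D, D\rangle = \sum_j D_j^2$; in Hörmander's convention $D_j = -i\,\partial_{x^j}$, this equals $-\triangle_{flat}$, whence $(-it/2)^j\langle A^{-1}D, D\rangle^j = (it/2)^j\triangle_{flat}^j = (it\triangle_{flat}/2)^j$ and the $j$-th term becomes $(it\triangle_{flat}/2)^j\, v(0)/j!$.

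It remains to replace $v$ by $u$ in the expansion and to identify the error norm. Because the phase $r^2/(2t)$ has its only stationary point at $r = 0$ and the bump $\chi$ is taken to be identically $1$ in a neighborhood of the origin, all partial derivatives of $v = \chi u$ agree with those of $u$ at $x = 0$; hence $(it\triangle_{flat}/2)^j v(0) = (it\triangle_{flat}/2)^j u(0)$ for every $j$, giving the stated main terms. The error estimate of Lemma 2.6 reads $C_k t^{n/2+k}\sum_{|\alpha|\le 2k+s}\Vert D^\alpha v\Vert_{L^2(\mathbf{R}^n)} = C_k t^{n/2+k}\Vert \chi u\Vert_{H^{2k+s}(\mathbf{R}^n)}$, and since $\mathrm{supp}(\chi u)\subset\Omega_{\underline{\it d}}$ every $L^2(\mathbf{R}^n)$ norm coincides with the corresponding $L^2(\Omega_{\underline{\it d}})$ norm; taking $\tilde C_k = C_k$ yields the asserted bound.

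I expect no genuine difficulty here, as the statement is a direct specialization of Lemma 2.6. The only points requiring care are bookkeeping ones: matching $g_{st}(r,\theta)\,dr\,d\theta$ with $dx$ and the convention by which the $r$-integral is written over all of $\mathbf{R}$ rather than $[0,\infty)$, together with keeping track of the sign in $\langle D, D\rangle = -\triangle_{flat}$ induced by $D = -i\partial$. Once these conventions are fixed, the constants and differential operators line up exactly as claimed.
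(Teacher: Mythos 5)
Your proposal is correct and follows exactly the route the paper intends: the paper offers no detailed argument at all, simply asserting that the corollary follows from the H\"ormander quadratic-phase lemma (Lemma~2.3 in the paper's numbering, which you call Lemma~2.6) by setting $A=I$ and passing to polar coordinates, and your write-up is precisely that specialization with the bookkeeping made explicit. Your added care about the points the paper glosses over --- multiplying by $\chi$ to get a Schwartz function, using $\chi\equiv 1$ near the origin so that derivatives of $\chi u$ and $u$ agree at $0$, and the sign convention $\langle D,D\rangle=-\triangle_{flat}$ --- is exactly what is needed and introduces no gap.
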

To obtain the criteria for compact Riemannian manifolds, we recall Gauss's lemma \cite{K-N}  
which asserts that the line element for  geodesic polar coordinates on $M$ is given by 
$$
ds^2=dr^2+g_{i j}(r, \theta) d\theta_i d\theta_j . 
$$
In particular, letting $r\rightarrow 0_{+}$, we know: 
$$
\frac{g(r, \theta)}{r^{n-1}}=\frac{\sqrt{\det g_{ij}(r, \theta)}}{r^{n-1}} \rightarrow 1.
$$
So 
\begin{equation}
g_{st}(r, \theta)/g(r, \theta) \rightarrow 1. 
\end{equation}
By Corollary 2.4 and putting $k=1$ and $k=2$, we have 
\begin{proposition} Let $\alpha=2+\frac{1}{2}[\frac{n+2}{2}]$. For $d<\underline{\it d}$ and $x\in M$, 
\begin{align*} 
&\Big| U_{\chi} (t) f(x)- f(x) \Big| \leqq C t \Vert (-\triangle+1)^{\alpha-1} f \Vert_{L^2(M)}, \\ 
&\Big| E_{\chi_1}(t) f(x)-\frac{R(x)}{12}f(x) \Big|  \leqq C t \Vert (-\triangle+1)^{\alpha-1} f \Vert_{L^2(M)}  \\
& and \\
&\Big| E_{\chi_2}(t) f(x) \Big|  \leqq C t \Vert (-\triangle+1)^{\alpha} f \Vert_{L^2(M)}
\quad \text{for}\ \ f(x)\in C^{\infty}(M). 
\end{align*}
\end{proposition}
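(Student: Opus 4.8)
The plan is to reduce all three estimates to a single application of the stationary phase estimate of Corollary 2.4, carried out in geodesic polar (normal) coordinates centered at the fixed point $x$. Writing the Riemannian volume element as $dy=g(r,\theta)\,dr\,d\theta$ with $r=d(x,y)$, I would split off the flat spherical density by $g(r,\theta)=g_{st}(r,\theta)\cdot\frac{g(r,\theta)}{g_{st}(r,\theta)}$; by Gauss's lemma and (2) the ratio $g/g_{st}$ extends to a smooth positive function up to $r=0$ with value $1$ there. Each operator then takes the form $(2\pi i)^{-n/2}\int\psi(r,\theta)\,e^{ir^2/2t}g_{st}\,dr\,d\theta$, to which Corollary 2.4 applies, the amplitude $\psi$ being built from the relevant van Vleck data times $f\circ\exp_x$ times $g/g_{st}$.

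For $U_{\chi}(t)$ I would take $k=1$ and $s=[\frac{n+2}{2}]$, which is $>n/2$. Here $\psi=\chi\,a\,(f\circ\exp_x)(g/g_{st})$ with $a=\sqrt{V}=t^{-n/2}\{\det g_{ij}\}^{-1/4}$ by Lemma 2.1, so the only $t$-dependence is the overall factor $t^{-n/2}$. Corollary 2.4 yields the leading term $t^{n/2}\psi(0)$, and since at $r=0$ one has $\det g_{ij}=1$ and $g/g_{st}\to1$, this equals $t^{n/2}\cdot t^{-n/2}f(x)=f(x)$. The remainder is bounded by $Ct^{n/2+1}t^{-n/2}\|\chi\tilde\psi\|_{H^{2+s}}$ with $\tilde\psi=t^{n/2}\psi$ independent of $t$; since multiplication by the smooth factors $\{\det g\}^{-1/4}$ and $g/g_{st}$ is bounded on $H^{2+s}$ and $\|f\circ\exp_x\|_{H^{2+s}}$ is comparable to the local $H^{2+s}$ norm of $f$, this is $\le Ct\|f\|_{H^{2+s}}=Ct\|f\|_{H^{2(\alpha-1)}}\le Ct\|(-\triangle+1)^{\alpha-1}f\|_{L^2}$. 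The estimate for $E_{\chi_1}(t)$ is identical with $k=1$, except that Lemma 2.2 now supplies the diagonal value $\frac12\triangle_x a|_{x=y}=\frac{R(x)}{12}t^{-n/2}$, so the leading term is $\frac{R(x)}{12}f(x)$. Before this I would discard the $\frac12(\triangle_x\chi)a$ piece: since $\triangle_x\chi$ is supported in the annulus where $\chi$ transitions, away from $r=0$, it is a non-stationary integral and contributes $O(t^N)$ for every $N$ by repeated integration by parts, affecting neither the leading term nor the $O(t)$ bound.

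The operator $E_{\chi_2}(t)$ is different because its amplitude carries $\nabla_x\chi=\chi'(r)\nabla_x r$, supported entirely away from $r=0$. Expanding $\nabla_x K=[\nabla_x a+a\frac{ir}{t}\nabla_x r]e^{ir^2/2t}$ and using $|\nabla_x r|=1$, the amplitude becomes $\chi'(r)[\nabla_x r\cdot\nabla_x a+a\frac{ir}{t}]$, whose dominant $t$-scaling is $t^{-n/2-1}$ from the $a\,ir/t$ term. Because this amplitude vanishes identically near $r=0$, all Taylor data $\langle A^{-1}D,D\rangle^j u(0)$ in Corollary 2.4 vanish and the whole expression is pure remainder. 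Taking $k=2$, hence Sobolev order $2k+s=4+[\frac{n+2}{2}]=2\alpha$, gives $Ct^{n/2+2}\cdot t^{-n/2-1}\|f\|_{H^{2\alpha}}=Ct\|f\|_{H^{2\alpha}}\le Ct\|(-\triangle+1)^{\alpha}f\|_{L^2}$; this extra power of $1/t$ in the amplitude is exactly what forces $\alpha$ rather than $\alpha-1$.

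Finally, all pointwise estimates must be made uniform in $x$: by compactness of $M$ and smoothness of $g$, the normal-coordinate charts, the constants, and the comparison $\|f\circ\exp_x\|_{H^m(\Omega_{\underline{d}})}\le C\|f\|_{H^m(M)}$ may be chosen independent of $x$, while $\|f\|_{H^m(M)}\le C\|(-\triangle+1)^{m/2}f\|_{L^2}$ follows from the spectral theorem. I expect the main technical obstacle to be precisely this regularity-and-scaling bookkeeping: verifying that after extracting the explicit powers $t^{-n/2}$ (for $U_{\chi},E_{\chi_1}$) and $t^{-n/2-1}$ (for $E_{\chi_2}$) the amplitudes are genuinely smooth up to $r=0$ (respectively on the annulus) with $H^{2k+s}$ norms bounded uniformly in $x$ and independent of $t$, and that the density ratio $g/g_{st}$ and the pullback $f\circ\exp_x$ interact correctly with the flat Sobolev norm appearing in Corollary 2.4. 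Once this accounting is in place, the three estimates follow by the choices $k=1,1,2$.
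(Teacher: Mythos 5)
Your proposal is correct and follows essentially the same route as the paper: geodesic polar coordinates at $x$, insertion of the density ratio $g/g_{st}$ via Gauss's lemma and (2), application of Corollary 2.4 with $k=1$ for $U_{\chi}$ and $E_{\chi_1}$ (using the diagonal values $t^{-n/2}$ and $R(x)/(12t^{n/2})$ from Lemmas 2.1--2.2) and $k=2$ for $E_{\chi_2}$ (using that $\nabla_x\chi$ vanishes near $r=0$ to absorb the extra $1/t$), followed by the comparison $\Vert \chi f\Vert_{H^{N}_{flat}}\leqq C\Vert(-\triangle+1)^{N/2}f\Vert_{L^2(M)}$ from local charts and elliptic estimates. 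The only cosmetic deviation is that you discard the $(\triangle_x\chi)a$ term by non-stationary phase, whereas the paper simply keeps it in the amplitude since it vanishes at $r=0$; both are equally valid.
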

\begin{proof} Take $x$-centered geodesic polar coordinate and $\Omega_{R_x}=\{ y\in S^2\ |\ d(x,y)<R \}$. 
\begin{align*} 
\Big| U_{\chi} (t) f(x)- f(x) \Big| &= 
\Big|\frac{1}{(2\pi i)^{n/2}}\int_{M} \chi(d(x,y)) 
\sqrt{V(t, x, y)} e^{iS(t,\ x,\ y)}f(y)\;g_r(r, \theta) dr d\theta -f(x) \Big|  \\
& = \Big|\frac{1}{(2\pi i)^{n/2}}\int_{M} \chi(d(x,y)) 
\sqrt{V(t, x, y)} e^{iS(t,\ x,\ y)}f(y)\;\frac{g_r(r, \theta)}{g_{st}(r, \theta)} 
g_{st}(r, \theta) dr d\theta  -f(x) \Big|
\end{align*}
Note that 
$\frac{g_{r}(r, \theta)}{g_{st}(r,\theta)}|_{r=0}=1$ and $\chi(d(x,y)) \sqrt{V(t, x, y)}\;|_{d=0}=t^{-n/2}$ 
from Lemma 2.1 and $g_{ij}(x)|_{d=0}=\delta_{ij}$. By Corollary 2.4 and putting $k=1$, 
$$ \Big| U_{\chi} (t) f(x)- f(x) \Big| \leqq c_1 t \Vert \chi f \Vert_{H^{2+s}_{flat}(\Omega_{R_x})}. $$
Similarly $\big[ \chi \left\{\frac{1}{2}\triangle_x a(t, x, y) \right\} 
+\frac{1}{2}(\triangle_x \chi) a(t, x, y)  \big]|_{r=0}=R(x)/12t^{n/2}$, it follows that 
$$ \Big| E_{\chi_1} (t) f(x)-R(x)/12 \Big| \leqq c_2 t \Vert \chi f \Vert_{H^{2+s}_{flat}(\Omega_{R_x})}. $$    
To estimate $E_{\chi_2}(t)$ we need to put $k=2$, since $\nabla_x K(t, x, y)$ gives higher-order singular $1/t$ terms. $\nabla_x \chi=0$ on the neighborhood of $d=0$ and so Corollary 2.4 can be applied 
to $E_{\chi_2}(t)$:    
$$ \Big| E_{\chi_2} (t) f(x) \Big| \leqq c_3 t \Vert \chi f \Vert_{H^{4+s}_{flat}(\Omega_{R_x})}. $$ 
Thus we only have to use    
$$ 
\Vert \chi f \Vert_{H^N_{flat}(\Omega_{R_x})} \leqq c_4 \Vert (-\triangle+1)^{N/2} f \Vert_{L^2(M)} \\ 
$$ 
on local charts (See e.g. \cite{Au}).  We mention shortly this inequality for the reader's convenience.  
\par 
Take one "atlas" $\mathcal{A}$. Making the change of variables $y=T(x)$ and 
using $dy= |\det T|' dx \leqq \epsilon dx$, 
$$
\Vert \chi f \Vert_{H^N_{flat}(\Omega_x)} 
$$
are equivalent under changing coordinates. 
Furthermore, comaring with flat and manifold metric and using $g_{flat} \sim g$ on small local charts,  
$$
\Vert \chi f \Vert_{H^N_{flat}(\Omega_x)} \leqq c_5 \Vert \chi f \Vert_{H^N(M)}.
$$
Let $\phi_i$ be a partition of unity associated to $\mathcal{A}$. 
Recall that $\chi$ is said to be $C^{\infty}$ if $\chi\circ x_i^{-1}\in C^{\infty}$, we find  
$$
\Vert (\phi_i \chi f)\circ x_i^{-1}\Vert_{H^N(M)}
=\Vert [ \chi \circ x_{i}^{-1}](\phi_i f )\circ x_{i}^{-1} \Vert_{{H^N}(M)}
\leqq C_i \Vert (\phi_{i}f) \circ x_{i}^{-1} \Vert_{H^N(M)} 
$$ 
and summing this equation on $i$ shows 
$\Vert \chi f \Vert_{H^N(M)} \leqq 
c_6 \Vert f \Vert_{H^N{(M)}}$ 
holds with $c_6 \equiv max_{i}C_{i}$. 
Summarizing the calculations, 
\begin{equation}
\Vert \chi f \Vert_{H^N_{flat}(\Omega_{R_x})} \leqq c_6 \Vert f \Vert_{H^N(M)}.  
\end{equation}
We apply G\r{a}rding inequality of elliptic operators to (3), 
$$
\Vert \chi f \Vert_{H^N_{flat}(\Omega_{R_x})} \leqq c_7 \Vert f \Vert_{H^N(M)}
\leqq C \Vert (-\triangle+1)^{N/2} f \Vert_{L^2(M)}.  
$$
\end{proof}
\begin{proposition}
Let $(M, g)$ be a compact Riemannian manifold. For $\alpha=2+\frac{1}{2}[\frac{n+2}{2}]$
\begin{equation}
\Vert \{ U_{\chi}(t) f(x) 
-e^{\frac{it}{2}(\triangle-\frac{R(x)}{6})}\} f(x) \Vert_{L^2}
\leqq  \frac{C t^2}{2} \Vert (-\triangle+1)^{\alpha} f(x) \Vert_{L^2}.
\end{equation} 
\begin{proof}
 For $f(x) \in C^{\infty}(M)$ and $E(t)=E_{\chi_1}(t)+E_{\chi_2}(t)$  
$$
\left(i\frac{\partial}{\partial t}+\frac{1}{2}\triangle_x \right)U_{\chi}(t) f(x)
={E(t)} f(x).  
$$
and so 
$$
\left(i\frac{\partial}{\partial t}+\frac{1}{2}\triangle_x-\frac{R(x)}{12} \right)U_{\chi}(t) f(x)
=\left( {E(t)} -\frac{R(x)}{12}U_{\chi}(t) \right) f(x).  
$$
Letting $ \widehat{E(t)}= {E(t)} -\frac{R(x)}{12}U_{\chi}(t)$,  
$$
U_{\chi}(t) f(x) 
=e^{\frac{it}{2}(\triangle-\frac{R(x)}{6})}
\left(1+ \int_{0}^t e^{\frac{-is}{2}(\triangle-\frac{R(x)}{6})} \widehat{E(s)} ds \right) f(x).   
$$
From Proposition 2.5 
\begin{align*}
|\widehat{E(t)} f(x)| &\leqq \Big|\left({E(t)} -\frac{R(x)}{12} \right)f(x)\Big| 
+  \Big|\left(\frac{R(x)}{12} {U_{\chi}(t)} -\frac{R(x)}{12} \right)f(x)\Big| \\
           &\leqq  C t \Vert (-\triangle+1)^{\alpha} f \Vert_{L^2(M)}.
\end{align*}
It follows 
\begin{align*}
\Vert  \int_{0}^t  e^{\frac{-is}{2}(\triangle-\frac{R(x)}{6})}  \widehat{E(s)} f(x) ds  \Vert_{L^2}
&\leqq  \int_{0}^t \Vert e^{\frac{-is}{2}(\triangle-\frac{R(x)}{6})} \widehat{E(s)} f(x) \Vert_{L^2} ds \\
&=  \int_{0}^t \Vert \widehat{E(s)} f(x) \Vert_{L^2} ds \\
&\leqq  \int_{0}^t  \tilde{C} s \Vert (-\triangle+1)^{\alpha} f(x) \Vert_{L^2} ds \\
&\leqq  \frac{\tilde{C} t^2}{2} \Vert (-\triangle+1)^{\alpha} f(x) \Vert_{L^2}
\end{align*} 
as desired. 
\end{proof}
\end{proposition}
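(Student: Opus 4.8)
The plan is to realize $U_{\chi}(t)$ as an approximate propagator for the curvature-modified Schr\"odinger equation and then promote this into the stated $O(t^2)$ bound by a Duhamel argument; I would carry out the whole computation for $f\in C^{\infty}(M)$ and extend to the full Sobolev space at the end by density. The starting point is the kernel identity already established above: since the amplitude $a(t,x,y)=\sqrt{V(t,x,y)}$ obeys the transport equation and $S$ solves Hamilton--Jacobi, one gets
$$
\left(i\frac{\partial}{\partial t}+\frac{1}{2}\triangle_x\right)U_{\chi}(t)f=E(t)f,\qquad E(t)=E_{\chi_1}(t)+E_{\chi_2}(t).
$$
The decisive input is Lemma 2.2, which shows that the diagonal value of $\tfrac12\triangle_x a$ is exactly $\tfrac{R(x)}{12}t^{-n/2}$, so that the leading part of $E_{\chi_1}(t)$ is multiplication by $R(x)/12$. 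Because $\tfrac12(\triangle-R/6)=\tfrac12\triangle-R/12$, I would absorb this term into the operator and write
$$
\left(i\frac{\partial}{\partial t}+\frac{1}{2}\left(\triangle_x-\frac{R(x)}{6}\right)\right)U_{\chi}(t)f=\widehat{E(t)}f,\qquad \widehat{E(t)}:=E(t)-\frac{R(x)}{12}U_{\chi}(t).
$$
This is precisely where the DeWitt correction appears, and checking that the factor $\tfrac12$ in front of the Laplacian converts the $R/12$ of Lemma 2.2 into the expected $R/6$ is the conceptually essential point.

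Next I would integrate by variation of constants. The operator $P:=\tfrac12(\triangle-R/6)$ is self-adjoint on $L^2(M)$, since the Laplacian on a closed manifold is self-adjoint and multiplication by the smooth real function $R/6$ is a bounded self-adjoint perturbation; hence by Stone's theorem $e^{itP}$ is a unitary group. Using $U_{\chi}(0)=I$, which follows from the first estimate of Proposition 2.5 as $t\to0$, Duhamel's formula yields
$$
U_{\chi}(t)f=e^{\frac{it}{2}(\triangle-\frac{R}{6})}\left(f-i\int_{0}^{t}e^{-\frac{is}{2}(\triangle-\frac{R}{6})}\widehat{E(s)}f\,ds\right).
$$
Subtracting $e^{\frac{it}{2}(\triangle-R/6)}f$, taking $L^2$ norms, and using unitarity of $e^{\pm isP}$ to discard the inner propagator reduces the whole estimate to controlling $\int_0^t\Vert\widehat{E(s)}f\Vert_{L^2}\,ds$.

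The remaining task is to bound $\Vert\widehat{E(s)}f\Vert_{L^2}$, where I would simply assemble the three estimates of Proposition 2.5. Writing
$$
\widehat{E(s)}f=\Big(E_{\chi_1}(s)f-\tfrac{R}{12}f\Big)+E_{\chi_2}(s)f-\tfrac{R}{12}\big(U_{\chi}(s)f-f\big),
$$
each summand is pointwise $O(s)$ times a Sobolev norm of $f$: the $E_{\chi_2}$ term is the most demanding and carries the full weight $(-\triangle+1)^{\alpha}$, while the first and third only require $(-\triangle+1)^{\alpha-1}$ together with the boundedness of $R$. Since these right-hand sides are constant in $x$ and $M$ has finite volume, the pointwise bounds immediately give $\Vert\widehat{E(s)}f\Vert_{L^2}\leq\tilde C\,s\,\Vert(-\triangle+1)^{\alpha}f\Vert_{L^2}$, and integrating from $0$ to $t$ produces the factor $t^2/2$, which is the claim; the extension from $C^{\infty}(M)$ to general $f$ follows because $U_{\chi}(t)-e^{itP}$ is bounded on $L^2$ and both sides pass to the limit along a sequence converging in $H^{2\alpha}(M)$.

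I expect the genuinely hard analysis to reside not in this proposition but in Proposition 2.5, whose stationary-phase expansion and flat-versus-curved volume comparison I am taking as given. Granting those, the main obstacle here is organizational: correctly matching the curvature constant so that $R/12$ becomes $R/6$, and controlling the auxiliary term $-\tfrac{R}{12}U_{\chi}(s)$ in $\widehat{E}$, which is exactly where the $U_{\chi}(s)f\to f$ estimate of Proposition 2.5 is needed to keep that contribution of order $s$ rather than of order one.
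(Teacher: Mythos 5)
Your proposal is correct and follows essentially the same route as the paper: the same rewriting of the evolution equation with $\widehat{E(t)}=E(t)-\frac{R}{12}U_{\chi}(t)$, the same Duhamel formula against the unitary group $e^{\frac{it}{2}(\triangle-\frac{R}{6})}$, and the same three-term splitting of $\widehat{E(s)}f$ estimated by Proposition 2.5 to get the $O(s)$ bound and hence the factor $t^{2}/2$. The only differences are cosmetic ones in your favor: you keep track of the $-i$ in the Duhamel integral (which the paper drops, harmlessly, since only norms are used) and you state the density extension explicitly.
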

\section{Feynman path integral  for low energy functions on rank $1$ locally symmetric Riemannian manifolds}
The purpose of this section is to show 
the products of $U_\chi$'s 
converge uniformly for low energy functions in $L^2$. 
From now on, till the end of \S 4, we only consider rank $1$ locally symmetric Riemannian manifolds. 
These manifolds possess nice geometric properties; in particular, they 
are two-point homogeneous spaces. That is the isometry group on $(M, g)$ is transitive on 
the set of all equidisitant point pairs. Such a situation allows us that the commutator 
$[\triangle, d(x, y)]=0$ locally and the scalar curvature $R$ is constant (See e.g. \cite{SH}). 
Moreover $\sqrt{V}$ depends only on the distance 
for $d<\underbar{\it d}$ and  is considered as the density of paths connecting $x$ and $y$ 
(See e.g. \cite{Wa}). 
We abbreviate $U_{\chi}(t)-e^{\frac{it}{2}(\triangle-\frac{R(x)}{6})}$ 
to $\tilde{E}(t)$ in the following sentences. 
\setcounter{section}{3}
\setcounter{theorem}{0}
\begin{proposition}[Time slicing products and energy limits]
Let $(M, g)$ be a  compact, oriented,  rank $1$ locally symmetric Riemannian manifold. 
For small $\varepsilon>0$, we have 
$$
\lim_{N\rightarrow \infty} \Vert [\{U_\chi(t/N)\}^N 
-e^{\frac{it}{2}(\triangle-\frac{R}{6})}]\rho({N^{1/\alpha-\varepsilon}}) \Vert_{L^2}=0. 
$$
\end{proposition}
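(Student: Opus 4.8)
The plan is to exploit the rank-1 locally symmetric structure in order to \emph{diagonalize} every operator in sight simultaneously, thereby reducing the operator-norm estimate to a family of scalar estimates indexed by the eigenvalues of $-\triangle$. The starting point is the observation recorded in the preamble to this section: on a two-point homogeneous space the scalar curvature $R$ is constant, and the kernel $\chi(d(x,y))\sqrt{V(t,x,y)}e^{iS(t,x,y)}$ of $U_\chi(t)$ depends on $(x,y)$ only through the geodesic distance $d(x,y)$. Such a radial kernel is invariant under the isometry group, and on a compact two-point homogeneous space the operators with radial kernel are simultaneously diagonalized with $\triangle$; equivalently $U_\chi(t)$ is a function of $\triangle$. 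Hence $U_\chi(t)$, the group $e^{\frac{it}{2}(\triangle-\frac{R}{6})}$, and every spectral projector $\rho(\Lambda)$ commute and share the eigenbasis $\{u_j\}$ of $-\triangle$.

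Writing $-\triangle u_j = E_j u_j$ and $V_N := e^{\frac{it}{2N}(\triangle-\frac{R}{6})}$, on the $j$-th eigenspace the propagator $V_N$ acts as the unimodular scalar $\nu_j = e^{-\frac{it}{2N}(E_j+\frac{R}{6})}$, while $U_\chi(t/N)$ acts as some scalar $\mu_j$. Applying Proposition 2.6 to the smooth eigenfunction $u_j$ gives at once $|\mu_j-\nu_j|\leq \frac{C}{2}(t/N)^2(E_j+1)^{\alpha}$, and therefore $|\mu_j|\leq 1+\frac{C}{2}(t/N)^2(E_j+1)^{\alpha}$. Since $V_N^{N}=e^{\frac{it}{2}(\triangle-\frac{R}{6})}$ by the group law, the operator $[\{U_\chi(t/N)\}^{N}-e^{\frac{it}{2}(\triangle-\frac{R}{6})}]\rho(\Lambda)$ is diagonal with entries $\mu_j^{N}-\nu_j^{N}$ for $E_j\leq\Lambda$ and $0$ otherwise, so its $L^2$ operator norm equals $\sup_{E_j\leq\Lambda}|\mu_j^{N}-\nu_j^{N}|$.

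I would then estimate each scalar entry with the elementary inequality $|\mu^{N}-\nu^{N}|\leq N\max(|\mu|,|\nu|)^{N-1}|\mu-\nu|$. Setting $\Lambda=N^{1/\alpha-\varepsilon}$ and $a_N:=\frac{C}{2}(t/N)^2(\Lambda+1)^{\alpha}$, for every $E_j\leq\Lambda$ one has $|\mu_j-\nu_j|\leq a_N$ and $\max(|\mu_j|,|\nu_j|)\leq 1+a_N$, whence the norm is bounded by $N(1+a_N)^{N-1}a_N$. Because $(\Lambda+1)^{\alpha}\sim N^{1-\alpha\varepsilon}$ we obtain $Na_N\sim \frac{Ct^2}{2}N^{-\alpha\varepsilon}\to 0$, so $(1+a_N)^{N-1}\leq e^{Na_N}$ stays bounded and $N(1+a_N)^{N-1}a_N\to 0$. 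This is precisely the asserted limit, and it also explains the exponent $1/\alpha$: it is the largest rate at which the energy may grow so that the Sobolev loss $(\Lambda+1)^{\alpha}$ of Proposition 2.6, amplified across the $N$ time slices each carrying the factor $(t/N)^2$, is still beaten by the decay.

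The step I expect to be the genuine obstacle is the first one, namely justifying rigorously that $U_\chi(t)$ commutes with $\triangle$ (equivalently, is a spectral multiplier). This is exactly where the rank-1 hypothesis is indispensable: two-point homogeneity is what forces the kernel to be radial and hence isometry-invariant, and one must verify that neither the compactly supported cut-off $\chi$ nor the passage from the symmetric model to its compact quotient destroys this invariance. Once the simultaneous diagonalization is secured, the remainder is the soft scalar computation above; without it one would be forced to control the full operator norms $\|\{U_\chi(t/N)\}^{k}\|$ on all of $L^2$, for which the high-Sobolev-norm error bound of Proposition 2.6 is of no use, since the iterates need not remain in the low-energy subspace.
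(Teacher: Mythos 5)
Your proof is correct and is essentially the paper's own argument in slightly different packaging: both hinge on the same structural fact (rank-1 symmetry makes the radial-kernel operator $U_\chi(t)$ respect the spectral decomposition of $\triangle$ --- the paper invokes just the commutation $\tilde E(t)\hat H=\hat H\tilde E(t)$, you the stronger scalar diagonalization), both feed the one-step $O(t^2)$ estimate with $(-\triangle+1)^{\alpha}$ Sobolev loss (Proposition 2.6) into the $N$-fold product (the paper via a binomial expansion of $\{e^{it\hat H/2N}(1+\tilde E(t/N))\}^N$ combined with $\Vert(-\triangle+1)^{k\alpha}\rho(E)f\Vert_{L^2}\leq(E+1)^{k\alpha}\Vert f\Vert_{L^2}$, you via eigenvalue-wise telescoping), and both close with the identical computation at $E=N^{1/\alpha-\varepsilon}$. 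The only remark worth adding is that your scalar-multiplier claim, while true on these spaces (by lifting to the two-point homogeneous universal cover and using the mean-value/spherical-transform property, legitimate because the support of $\chi$ lies within the injectivity radius), is more than you need: commutation of $U_\chi(t)$ with $\triangle$ alone makes your difference operator block-diagonal on eigenspaces, and your telescoping inequality $\Vert A^N-\nu^N I\Vert\leq N\max(\Vert A\Vert,|\nu|)^{N-1}\Vert A-\nu I\Vert$ runs verbatim block-by-block, so the "genuine obstacle" you flag reduces to exactly the commutation statement the paper asserts.
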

\begin{proof} 
From (4)  
$$
\Vert  \tilde{E} (t)  \Vert_{L^2}
\leqq  \frac{C t^2}{2} \Vert (-\triangle+1)^{\alpha} f(x) \Vert_{L^2}.
$$
Let $\hat{H}=\triangle-\frac{R}{6}$.  
If $M$ is a rank $1$ locally symmetric Riemannian manifold, $R(x)$ is a constant function and  
$\tilde E(t) \hat{H} =\hat{H} \tilde E(t) $. Consequently we have 
\begin{align*}
\Vert \underbrace {e^{\frac{it \hat{H}}{2N}} e^{\frac{it \hat{H}}{2N}} \cdots 
e^{\frac{it \hat{H}}{2N}}}_{N-k\ \mbox{times}} 
\underbrace {\tilde E(t/N) \tilde E(t/N) \cdots \tilde E(t/N)}_{k\ \mbox{times}} f(x)\Vert_{L^2} 
\leqq \Big(\frac{{C}}{2}\Big)^k \Big(\frac{t}{N}\Big)^{2k} \Vert (-\triangle+1)^{k\alpha} f(x) \Vert_{L^2}. 
\end{align*}
The binomical coefficients bounds $\begin{pmatrix} N \\ k \end{pmatrix}\frac{1}{N^k}<\frac{1}{k!}$ 
yields the following estimates 
\begin{align*}
\Vert \{e^{it\hat{H}/2}-U_\chi(t/N)^n\}f(x) \Vert_{L^2} 
&=\Vert \left[e^{it\triangle/2}-\{e^{it\hat{H}/2N}(1+\tilde E(t/N))\}^N \right] f(x) \Vert_{L^2} \\  
&\leqq \sum_{k=1}^N 
\begin{pmatrix} N \\ k \end{pmatrix} 
\Vert \{e^{i(N-k)t \hat{H}/2N} \tilde E(t/N)^{k}\}f(x)  \Vert_{L^2} \\
&\leqq \sum_{k=1}^N 
\begin{pmatrix} N \\ k \end{pmatrix} 
\Big(\frac{{C}}{2}\Big)^k \Big(\frac{t}{N}\Big)^{2k} 
\Vert (-\triangle+1)^{k\alpha} f(x) \Vert_{L^2} \\
&\leqq \sum_{k=1}^N 
\frac{1}{k!}
\Big(\frac{{C}t^2}{2N} \Big)^k 
\Vert (-\triangle+1)^{k\alpha} f(x) \Vert_{L^2}. 
\end{align*}
By using $\Vert (-\triangle+1)^{k\alpha} \rho(E) f(x) 
\Vert_{L^2}\leqq (E+1)^{k\alpha} \Vert f(x) \Vert_{L^2}$,    
\begin{align*}
\Vert \{e^{it\hat{H}/2}-U_\chi(t/N)^N\} \rho(E) f(x) \Vert_{L^2} 
&\leqq \sum_{k=1}^N 
\frac{1}{k!}
\Big\{\frac{C(E+1)^\alpha t^2}{2N}\Big\}^k \Vert f(x) \Vert_{L^2} \\ 
&\leqq \Big[\exp \Big\{ \frac{C(E+1)^\alpha t^2}{2N} \Big\}-1 \Big] \Vert f(x) \Vert_{L^2} \\
&\leqq \frac{C_2(E+1)^\alpha t^2}{2N} \Vert f(x) \Vert_{L^2}. 
\end{align*}
Thus for small $\epsilon >0$, 
$$
\lim_{N\rightarrow \infty} \Vert [\{U_\chi(t/N)\}^N  
-e^{\frac{it\triangle}{2}}] 
\rho({N^{1/\alpha-\varepsilon}}) \Vert_{L^2} \leqq 
\lim_{N\rightarrow \infty} \frac{C_2(N^{1/\alpha-\varepsilon}+1)^\alpha t^2}{2N}=0. $$ 
\end{proof} 
\par 
\begin{remark}
We note that $s\hspace{-1.5mm}\lim\limits_{E\rightarrow \infty} 
e^{\frac{it}{2}(\triangle-\frac{R}{6})} \rho(E) f(x)
=e^{\frac{it}{2}(\triangle-\frac{R}{6})} f(x)$, so 
$$
s\hspace{-1.5mm}\lim_{N\rightarrow \infty} \{U_\chi(t/N)\}^N 
\rho({N^{1/\alpha-\epsilon}}) f(x) 
=e^{\frac{it}{2}(\triangle-\frac{R}{6})} f(x) \quad\mbox{for}\ \forall\; f(x)\in L^2(M).  
$$
In $\S 4$, we show the stronger result by substituing $\rho({N}) $ for $\rho({N^{1/\alpha-\epsilon}}) $. 
\end{remark}
\par
\begin{remark}
Some Trotter-Kato formulas for Feynman's operational caluculus contain  
infinite many spectral projectors, however we used a spectral projector once only(See \cite{I-T}).  
\end{remark}
\setcounter{section}{3}
\setcounter{theorem}{1}
\section{Strong limits for high energy functions}
In this section, we have 
the strong but not uniform convergence of time slicing products. 
To do this, we introduce the $L^2$ estimates known as H\"ormander and Maslov's 
theorem (See e.g. \cite[Theorem 2.1.1]{S} for more details). 
\begin{lemma} Let $a\in C_0({\mathbf{R}^n})$ and assume that 
$\Phi \in C^\infty$ satisfies $|\nabla \Phi|\geqq c >0$ on supp\ $a$. 
Then for all $\lambda>1$, 
$$
\Big|\int_{\mathbf{R}^n} a(x)e^{i \lambda \Phi(x)}\; dx \Big| 
\leqq C_{N} \lambda^{-N}, \quad N=1, 2, \cdots
$$ 
where $C_N$ depends only on $c$ if $\Phi$ and $a$ belong to a bounded 
subset of $C^{\infty}$ and $a$ is supported in a fixed compact set. 
\end{lemma}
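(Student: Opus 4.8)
The plan is to prove this by the classical \emph{non-stationary phase} argument, namely repeated integration by parts, exploiting the hypothesis that $\Phi$ has no critical point on $\mathrm{supp}\,a$. First I would introduce the first-order differential operator
$$
L \equiv \frac{1}{i\lambda\,|\nabla\Phi|^2}\sum_{j=1}^n (\partial_j\Phi)\,\partial_j ,
$$
whose coefficients are well defined and smooth on a neighborhood of $\mathrm{supp}\,a$ precisely because $|\nabla\Phi|\geq c>0$ there. The key algebraic identity is $L\,e^{i\lambda\Phi}=e^{i\lambda\Phi}$, which lets us insert $L$ freely in front of the oscillatory factor at no cost.

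Next I would integrate by parts. Since $a\in C_0(\mathbf{R}^n)$ has compact support, all boundary terms vanish, so writing the formal transpose as ${}^{t}\!L=(i\lambda)^{-1}\tilde L$ with the $\lambda$-independent operator $\tilde L\,a=-\sum_j \partial_j\!\big(\tfrac{\partial_j\Phi}{|\nabla\Phi|^2}\,a\big)$, one application gives $\int a\,e^{i\lambda\Phi}\,dx=(i\lambda)^{-1}\int (\tilde L a)\,e^{i\lambda\Phi}\,dx$. Each pass extracts one factor $\lambda^{-1}$ while keeping the amplitude supported in the same fixed compact set, so after $N$ iterations
$$
\Big|\int_{\mathbf{R}^n} a\,e^{i\lambda\Phi}\,dx\Big|
=\lambda^{-N}\Big|\int_{\mathbf{R}^n} (\tilde L^N a)\,e^{i\lambda\Phi}\,dx\Big|
\leq \lambda^{-N}\int_{\mathbf{R}^n}\big|\tilde L^N a\big|\,dx ,
$$
and the remaining integral carries no $\lambda$ dependence, which already yields the decay rate $\lambda^{-N}$.

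It then remains to track the constant $C_N$ and confirm its claimed uniformity. Expanding $\tilde L^N a$ by the Leibniz rule produces a finite sum of terms, each a product of derivatives of $a$ up to order $N$, derivatives of $\Phi$ up to order $N+1$, and negative powers of $|\nabla\Phi|^2$ of order at most $2N$; the lower bound $|\nabla\Phi|\geq c$ controls the last factors by a power of $c^{-1}$. Since $\Phi$ and $a$ are assumed to range over a bounded subset of $C^\infty$ with $a$ supported in a fixed compact set, every derivative appearing is bounded uniformly and the measure of the support is fixed, so $C_N$ depends only on $c$, on these uniform $C^\infty$ bounds, and on $N$, exactly as stated. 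The computation is otherwise routine; the one point that genuinely needs care is precisely this last bookkeeping step—verifying that repeatedly differentiating the coefficient $1/|\nabla\Phi|^2$ does not spoil uniformity—since this is where the hypothesis $|\nabla\Phi|\geq c$ is essential for taming the negative powers of $|\nabla\Phi|$ that proliferate under iteration of $\tilde L$.
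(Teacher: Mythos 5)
Your proof is correct, and it is the standard non-stationary phase argument; but it is organized differently from the paper's proof, and the difference is worth noting. The paper does not use your global operator $L=\frac{1}{i\lambda|\nabla\Phi|^{2}}\sum_{j}(\partial_{j}\Phi)\,\partial_{j}$. Instead it first localizes: around each point of $\mathrm{supp}\,a$ it picks a direction $\nu$ with $|\nu\cdot\nabla\Phi|\geqq c/2$ on a small ball, uses compactness to extract a finite partition of unity $\{\alpha_{j}\}$ subordinate to such balls, rotates coordinates so that $\nu_{j}=(1,0,\ldots,0)$, and then integrates by parts with the one-dimensional operator $L=\frac{1}{i\lambda\,\partial\Phi/\partial x_{1}}\frac{\partial}{\partial x_{1}}$ on each piece $a_{j}=\alpha_{j}a$. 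What the paper's route buys is simpler bookkeeping in the iteration: the transpose $(L^{*})^{N}a_{j}$ involves only $x_{1}$-derivatives and reciprocals of the single function $\partial\Phi/\partial x_{1}$, so the bound $(L^{*})^{N}a_{j}=O(\lambda^{-N})$ is immediate; the price is the partition-of-unity and change-of-coordinates overhead. What your route buys is the elimination of that overhead entirely, since $|\nabla\Phi|\geqq c$ makes the gradient operator usable on all of $\mathrm{supp}\,a$ at once; the price is that iterating $\tilde{L}$ generates products of derivatives of $\Phi$ with growing negative powers of $|\nabla\Phi|$, and you correctly identify that taming these powers via the lower bound $c$ is the one step requiring care for the claimed uniformity of $C_{N}$. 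Indeed, on that uniformity point your write-up is somewhat more explicit than the paper's, which disposes of it with the single remark that the assumptions imply $(L^{*})^{N}a_{j}=O(\lambda^{-N})$. Both arguments yield constants depending only on $c$, the $C^{\infty}$ bounds, and the fixed compact support, as the statement requires.
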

\begin{proof}
Given $x_0 \in \mbox{supp}\; a$ there is a direction $\nu \in S^{n-1}$ 
such that $|\nu \cdot \nabla \Phi|\geqq \frac{c}{2}$ on some 
ball centered at $x_0$. Thus, by compactness, we can choose a partition 
of unity $\alpha_j \in C_0^{\infty}$ consisting of a finite number 
of terms and corresponding unit vectors $\nu_{j}$ 
such that $\sum \alpha_j(x)=1$ on $\mbox{supp}\ a$ and 
$|\nu_j \cdot \nabla \Phi|\geqq \frac{c}{2}$ on $\mbox{supp}\ \alpha_j$. 
If we set $a_j(x)=\alpha_{j}(x)a(x)$, it suffices to prove that 
for each $j$ 
$$
\Big|\int_{\mathbf{R}^n} a_j(x)e^{i \lambda \Phi(x)}\; dx \Big| 
\leqq C_{N} \lambda^{-N}, \quad N=1, 2, \cdots
$$ 
After possibly changing coordinates we may assume that 
$\nu_j=(1, 0, \ldots, 0)$ which means that 
$|\partial \Phi/ \partial x_1| \geqq c/2$ on supp $a_j$. If we let 
$$
L(x, D)=\frac{1}{i\lambda \partial \Phi/ \partial x_1}
\frac{\partial}{\partial x_1}, 
$$
then 
$L(x, D)e^{i\lambda \Phi(x)}=e^{i\lambda \Phi(x)}$. 
Consequently, if 
$
L^* = L^*(x, D)=\frac{\partial}{\partial x_1}
\frac{1}{i\lambda \partial \Phi/ \partial x_1}
$
is the adjoint, then 
$$
\int_{\mathbf{R}^n} a_j(x)e^{i \lambda \Phi(x)}\; dx = 
\int_{\mathbf{R}^n} (L^*)^N a_j(x)e^{i \lambda \Phi(x)}\; dx.  
$$
Since our assumptions imply that $(L^*)^N a_j=O(\lambda^{-N})$, 
the results follows. 
\end{proof}
\begin{lemma}
Suppose that $\phi(x, y)$ is a real $C^{\infty}$ function satisfying 
the non-degeneracy condition 
$$
\det \left(\frac{\partial^2 \phi}{{\partial x_j}{\partial y_k}}\right)\not =0
$$
on the support $a(x, y)\in C_0^{\infty}(\mathbf{R}^n \times \mathbf{R}^n)$. 
Then for $t>0$, 
$$
\Vert \int_{\mathbf{R}^n} e^{i\frac{\phi(x, y)}{2t}} a(x, y) f(y)\; dy \Vert_{L^2({\mathbf{R}}^n)}
\leqq C t^{n/2} \Vert f \Vert_{L^2({\mathbf{R}}^n)}. 
$$
where $C$ is indep. of $t$ and $f(x)$. 
\end{lemma}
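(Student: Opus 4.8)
The plan is to prove this as an $L^2\to L^2$ bound by the standard $TT^*$ argument, using the preceding (non-stationary phase) lemma to control the kernel of $T_tT_t^*$. Write $T_tf(x)=\int_{\mathbf{R}^n}e^{i\phi(x,y)/2t}a(x,y)f(y)\,dy$, so the claim is $\|T_t\|_{L^2\to L^2}\leq Ct^{n/2}$. Since $\|T_t\|^2=\|T_tT_t^*\|$, it suffices to show $\|T_tT_t^*\|_{L^2\to L^2}\leq Ct^n$. The operator $T_tT_t^*$ has kernel
\[
K_t(x,z)=\int_{\mathbf{R}^n}e^{i(\phi(x,y)-\phi(z,y))/2t}\,a(x,y)\overline{a(z,y)}\,dy,
\]
and I would bound its operator norm by Schur's test, i.e. by proving $\sup_x\int|K_t(x,z)|\,dz\leq Ct^n$ together with the symmetric bound in $z$.

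The heart of the matter is a pointwise decay estimate for $K_t$. The non-degeneracy hypothesis $\det(\partial^2\phi/\partial x_j\partial y_k)\neq 0$ on $\mathrm{supp}\,a$ yields, via Taylor's theorem, the lower bound $|\nabla_y\phi(x,y)-\nabla_y\phi(z,y)|\geq c\,|x-z|$ on the support: one writes
\[
\nabla_y\phi(x,y)-\nabla_y\phi(z,y)=\Big(\int_0^1\frac{\partial^2\phi}{\partial x\,\partial y}(z+s(x-z),y)\,ds\Big)(x-z)
\]
and invokes invertibility of the Hessian block. I then set $\lambda=|x-z|/2t$ and $\Phi_{x,z}(y)=(\phi(x,y)-\phi(z,y))/|x-z|$, so that $|\nabla_y\Phi_{x,z}|\geq c$ and (since $\Phi_{x,z}(y)=\frac{x-z}{|x-z|}\cdot\int_0^1\nabla_x\phi(z+s(x-z),y)\,ds$) the phase $\Phi_{x,z}$ stays in a bounded subset of $C^\infty$ as $x,z$ range over compact sets, even as $|x-z|\to 0$. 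Applying the preceding lemma then gives $|K_t(x,z)|\leq C_N(t/|x-z|)^N$ for $|x-z|>2t$ and every $N$, while the trivial bound $|K_t(x,z)|\leq\|a\|_\infty^2\,|\mathrm{supp}\,a|$ handles $|x-z|\leq 2t$. Splitting the $z$-integral at $|x-z|=2t$ and choosing $N>n$ gives
\[
\int|K_t(x,z)|\,dz\leq C\!\!\int_{|x-z|\leq 2t}\!\!dz+C_N\!\!\int_{|x-z|>2t}\!\!\Big(\frac{t}{|x-z|}\Big)^N dz\leq C't^n+C''t^n\leq Ct^n,
\]
uniformly in $x$, which is exactly the Schur bound.

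The main obstacle is that the gradient lower bound, as derived, is only uniform when $x$ and $z$ are close, since the Hessian block is controlled only near the diagonal of $\mathrm{supp}\,a$; for $x,z$ far apart the Taylor argument does not directly supply a lower bound in terms of $|x-z|$. To handle this I would first reduce to small support: choose a finite partition of unity $a=\sum_\mu\chi_\mu a$ subordinate to a cover of $\mathrm{supp}\,a$ by balls small enough that, within each piece, the $x$- and $z$-projections lie in a common small ball, forcing $|x-z|<\delta$ where the Hessian estimate is valid with a uniform $c$. Bounding $\|T_t\|\leq\sum_\mu\|T_t^\mu\|$ over finitely many pieces (each constant independent of $t$) then closes the argument. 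The remaining verifications---that the amplitude $a(x,\cdot)\overline{a(z,\cdot)}$ lies in a fixed compact set with uniform $C^\infty$ bounds so that $C_N$ from the non-stationary phase lemma is independent of $(x,z)$---are routine.
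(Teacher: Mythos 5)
Your proposal is correct and takes essentially the same route as the paper's proof (which follows Sogge's Theorem 2.1.1): a partition of unity to localize the amplitude so the Hessian-based gradient lower bound holds with a uniform constant, an almost-orthogonality reduction, rapid kernel decay via the non-stationary phase Lemma 4.1, and the Schur test to get the $O(t^n)$ bound for the composed operator. The only cosmetic difference is that you bound $T_tT_t^{*}$ (kernel integrated in $y$, decay in $|x-z|$) while the paper expands $\Vert T_t f\Vert_{L^2}^2$ directly, i.e.\ works with $T_t^{*}T_t$ (kernel integrated in $x$, decay in $|y-z|$); your attention to the uniformity of the constants $C_N$ as $x\to z$ is, if anything, more careful than the paper's.
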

\begin{proof}
We note that 
$$ 
\nabla_x[\phi(x, y)-\phi(x, z)]
=\left(\frac{\partial^2 \phi(x, y)}{{\partial x_j}{\partial y_k}}\right)(y-z)
+O(|y-z|^2). 
$$
By using a smooth partition of unity we can decompose 
$a(x, y)$ into a finite number of pieces each of which 
has the property that 
\begin{equation}
|\nabla[\phi(x, y) - \phi(x, z)]|\geqq c|y-z| \quad \mbox{on supp}\ a, 
\end{equation}
for some $c>0$. 
\par 
To use this we notice that 
\begin{equation}
\Vert T_{t} f\Vert_2^2=\int\int K_t(y, z) f(y)\overline{f(z)}\ dy\; dz, 
\end{equation}
where 
$$
K_{t}(y, z)=\int_{{\mathbf{R}}^n} e^{\frac{i}{t}[\phi(x, y) - \phi(x, z)]} a(x, y) \overline{a(x, z)} dx. 
$$
However, (5) and Lemma 4.1 imply that 
$$
|K_{t}(y, z)|\leqq C_N(1+\frac{1}{t}|y-z|)^{-N} \quad \mbox{for}\ \ \forall N. 
$$
By appling Schur test, the operator with kernel $K_t$ sends 
$L^2$ into itself with norm $O(t^n)$. This along with (6) yields 
$$
\Vert T_{t} f\Vert^2_{L^2({\mathbf{R}}^n)} 
\leqq C t^{n} \Vert f \Vert^2_{L^2({\mathbf{R}}^n)}, 
$$
as desired. 
\end{proof}
\begin{lemma}
\begin{align*}
\Vert (\int_{0}^t e^{\frac{-is \widehat{H}}{2}} E_{\chi}(s) f(x) ds) \Vert_{L^2}
\leqq 
C_1 t \Vert  f(x) \Vert_{L^2}  
+ {C_2 t^2} \Vert (-\triangle+1) f(x) \Vert_{L^2}
\end{align*} 
\end{lemma}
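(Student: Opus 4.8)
The plan is to split $E_\chi=E_{\chi_1}+E_{\chi_2}$ and to treat the two pieces separately, since $E_{\chi_1}$ is responsible for the term $C_1 t\Vert f\Vert_{L^2}$ and $E_{\chi_2}$ for the term $C_2 t^2\Vert(-\triangle+1)f\Vert_{L^2}$. First I would use that $e^{-is\widehat{H}/2}$ is unitary by Stone's theorem, so that
$$\Big\Vert \int_0^t e^{-is\widehat{H}/2} E_\chi(s) f\, ds \Big\Vert_{L^2} \leq \int_0^t \Vert E_\chi(s) f \Vert_{L^2}\, ds,$$
reducing the statement to the pointwise-in-$s$ bounds $\Vert E_{\chi_1}(s)f\Vert_{L^2}\leq C\Vert f\Vert_{L^2}$ and $\Vert E_{\chi_2}(s)f\Vert_{L^2}\leq Cs\Vert(-\triangle+1)f\Vert_{L^2}$, which integrate to $C_1 t$ and $C_2 t^2/2$ respectively. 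Everything is localized by $\chi$ inside $d<\underline{\it d}$, so via an atlas and a partition of unity (exactly as in the proof of Proposition 2.5) I transfer each operator to $\mathbf{R}^n$ and invoke Lemma 4.2, whose non-degeneracy hypothesis $\det(\partial^2\phi/\partial x_j\partial y_k)\neq 0$ for the phase $\phi=d(x,y)^2$ is precisely the non-vanishing of the van Vleck determinant computed in Lemma 2.1.

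For $E_{\chi_1}$, the amplitude $\chi\{\tfrac12\triangle_x a\}+\tfrac12(\triangle_x\chi)a$ is, by $a=s^{-n/2}(\det g)^{-1/4}$, of the form $s^{-n/2}$ times a fixed $s$-independent, compactly supported smooth function. Writing the phase as $d^2/2s$ and applying Lemma 4.2 with parameter $s$ supplies the gain $s^{n/2}$, which cancels the prefactor and yields $\Vert E_{\chi_1}(s)f\Vert_{L^2}\leq C\Vert f\Vert_{L^2}$ uniformly in $s$. Integrating over $[0,t]$ produces the first term $C_1 t\Vert f\Vert_{L^2}$.

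The delicate piece is $E_{\chi_2}$. Here $\nabla_x K=[\nabla_x a+\tfrac{ia}{2s}\nabla_x(d^2)]e^{id^2/2s}$, so the amplitude carries an extra $1/s$ and behaves like $s^{-n/2-1}$; a naive use of Lemma 4.2 would only give $\Vert E_{\chi_2}(s)f\Vert_{L^2}\leq Cs^{-1}\Vert f\Vert_{L^2}$, which is not integrable near $s=0$. The saving grace, which I would exploit, is that $\nabla_x\chi$ vanishes in a neighborhood of the diagonal, so the amplitude is supported where $d(x,y)\geq\delta>0$; there Gauss's lemma gives $|\nabla_y(d^2/2)|=d\geq\delta$, so the phase is non-stationary in $y$. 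I would then integrate by parts twice in $y$ using the operator of Lemma 4.1, each integration producing a factor of $s$ while transferring a $y$-derivative onto the amplitude and onto $f$. Two integrations by parts lower the effective prefactor to $s^{-n/2+1}$ and leave at worst second derivatives of $f$; applying Lemma 4.2 (gain $s^{n/2}$) then yields $\Vert E_{\chi_2}(s)f\Vert_{L^2}\leq Cs\sum_{|\beta|\leq 2}\Vert\partial^\beta f\Vert_{L^2}\leq Cs\Vert(-\triangle+1)f\Vert_{L^2}$, the last step by the elliptic (G\r{a}rding) estimate already used in Proposition 2.5. Integrating over $[0,t]$ gives the second term $C_2 t^2\Vert(-\triangle+1)f\Vert_{L^2}$, and summing the two contributions completes the bound.

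I expect the main obstacle to be the bookkeeping in the double integration by parts for $E_{\chi_2}$: one must check that the terms in which derivatives fall on the cut-off and on the coefficient $\nabla_y\phi/|\nabla_y\phi|^2$, rather than on $f$, still carry the claimed power of $s$, and that the rescaled amplitudes stay in a fixed bounded subset of $C_0^\infty$ so that the constant furnished by Lemma 4.2 is independent of $s$. One should also confirm that the non-degeneracy of the mixed Hessian of $d^2$ persists on the full support $\delta\leq d<\underline{\it d}$ away from the diagonal, which again follows from the non-vanishing of the van Vleck determinant within the injectivity radius.
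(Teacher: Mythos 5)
Your argument is correct, but at the decisive step it takes a genuinely different route from the paper. The common part: both you and the paper localize with a partition of unity and dispose of $E_{\chi_1}$ (and of the piece of $E_{\chi_2}$ in which $\nabla_x$ falls on the amplitude $a$) via the non-degenerate-phase $L^2$ bound of Lemma 4.2, the non-degeneracy being exactly the non-vanishing of the van Vleck determinant from Lemma 2.1. The divergence is in the singular piece of $E_{\chi_2}$ carrying the factor $d/s$. The paper never estimates this piece pointwise in $s$: it rewrites it as a bounded operator $\widetilde{E_{\chi_2}}(s)$ plus an exact time derivative $\widehat{E_{\chi_2}}(s)f=\partial_s\bigl(c_n\, s^{1-n/2}\int_M \frac{b}{d}\, e^{id^2/2s}f\,dy\bigr)$, and then integrates by parts in the \emph{time} variable against $e^{-is\widehat{H}/2}$; the boundary term yields $C_5t\Vert f\Vert_{L^2}$, while the term in which $\partial_s$ hits the propagator produces the generator $\widehat{H}$ and hence $C_6t^2\Vert(-\triangle+1)f\Vert_{L^2}$ --- a step that tacitly uses the rank-one locally symmetric structure, since $\widehat{H}$ must be moved through an operator whose kernel is a function of $d(x,y)$ before it can land on $f$. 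You instead produce the factor $s$ and the derivatives on $f$ by non-stationary phase in the \emph{spatial} variable: on the support of $\nabla_x\chi$ one has $|\nabla_y(d^2/2)|=d\ge\delta$, so two applications of the operator of Lemma 4.1 gain $s^2$, and Lemma 4.2 then gives the pointwise bound $\Vert E_{\chi_2}(s)f\Vert_{L^2}\le Cs\Vert(-\triangle+1)f\Vert_{L^2}$, after which unitarity and the triangle inequality finish. Your route buys three things: it is elementary stationary-phase bookkeeping; it controls $E_{\chi_2}(s)$ itself rather than only its time integral (the paper's $\widehat{E_{\chi_2}}(s)$ is of size $1/s$ and not integrable, which is precisely why the paper must integrate by parts in time); and it nowhere uses two-point homogeneity, so your proof of this lemma extends verbatim to any compact manifold. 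The paper's route buys brevity, avoiding the double integration by parts. The verifications you flag all go through: $s^{n/2}a(s,x,y)$ is $s$-independent by Lemma 2.1, the coefficient $\frac{\nabla_y\Phi}{|\nabla_y\Phi|^2}=s\,\frac{\nabla_y d}{d}$ is smooth with $s$-uniform bounds on $\{\delta\le d<\underline{d}\}$ since $d$ is smooth off the diagonal inside the injectivity radius, and the mixed Hessian of $d^2$ stays non-degenerate there, so the constants furnished by Lemmas 4.1 and 4.2 are uniform in $s$.
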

\begin{proof}
We shall use the partition of unity $\{\phi_i\}$ on $M$ 
with very small support $\mbox{diam}\ \phi_i <\underline{d}$.
\par
If $d(\mbox{supp}(\phi_i) , \mbox{supp}(\phi_j))>\underline{d}+2\epsilon$, 
$$
\phi_j(x) \{ E_{\chi}(t) (\phi_i(y) f(y))\}(x)=0. 
$$
So we may assume $\phi_i$ and $\phi_j$ are contained in one local chart.   
The same calculation for $E_\chi$ on local charts as Lemma 4.2 implies 
$$
\Vert T_{t, i, j, k, l} f\Vert_2^2=\int\int K_{t, i, j, k, l}(y, z)
\{g^{1/2}(y)f(y)\} \overline{\{g^{1/2}(z)f(z)\}}\ dy\; dz,   
$$
where 
$$
K_{t, i, j, k, l}(y, z)=\int_{{\mathbf{R}}^2} e^{\frac{i}{2t}[d^2(x, y) - d^2(x, z)]} \phi_i(x) \overline{\phi_j(x)} 
\phi_{k}(y) \overline{\phi_{l}(z)}
a(x, y) \overline{a(x, z)} dx. 
$$
We give a simple explanation of the boundedness of $T_{t, i, j, k, l}$. 
$M$ is compact and so $c_1<g(y)<c_2$. From Lemma 2.1, 
$$
\det _{ij} \left(\frac{\partial^2 d^2}{{\partial {x_i}}{\partial {y_j}}}\right)
\quad \text{for}\ 0\leqq d <\underline{d} .
$$
is also bounded.  
Applying Lemma 4.2, we have $ \Vert T_{t, i, j, k, l} f\Vert_2<C $.  
$i, j$'s are finite and we conclude 
\begin{equation}
\Vert E_{\chi_1}(t) f \Vert_{L^2} =\Vert \sum\limits_{i,j} \phi_i(x) 
\{E_{\chi_1}(t) \phi_j(y) f\}(x) \Vert_{L^2}<C_3\Vert f \Vert_{L^2}.  
\end{equation}
For $E_{\chi_2}(t)$, we have 
\begin{align*}
&E_{\chi_2}(t)f(x) \\
&= \frac{1}{(2\pi i)^{n/2}}\int_{M} 
\nabla_x \chi \cdot \nabla_x K(t,x,y) f(y)\; dy \\ 
&= \frac{1}{(2\pi i)^{n/2}}\int_{M} 
\nabla_x \chi \cdot \nabla_x \{a(t, x, y)e^{\frac{id(x, y)^2}{2t}}\} f(y)\; dy \\
&= \frac{1}{(2\pi i)^{n/2}}\int_{M} 
\{\nabla_x \chi \cdot \nabla_x a(t, x, y)\}  e^{\frac{id(x, y)^2}{2t}} f(y)\; dy 
+\frac{1}{(2\pi i)^{n/2}}\int_{M} 
a(t, x, y) \nabla_x \chi \cdot \nabla_x e^{\frac{id(x, y)^2}{2t}}  f(y)\; dy \\
&=\frac{1}{(2\pi i)^{n/2}}\int_{M} 
\{\nabla_x \chi \cdot \nabla_x a(t, x, y)\}  e^{\frac{id(x, y)^2}{2t}} f(y)\; dy 
+\frac{1}{(2\pi i)^{n/2}}\int_{M} 
a(t, x, y) \frac{\partial \chi}{\partial d} 
\frac{\partial \{e^{\frac{id(x, y)^2}{2t}}\} }{\partial d}   f(y)\; dy.  
\end{align*}
Since the inner product of $\nabla_x g(d(x,y)) \cdot \nabla_x h(d(x, y))
=\frac{\partial g(d(x,y))}{\partial d} \frac{\partial h(d(x,y))}{\partial d}$ from Gauss's lemma about  
normal charts. The first term is bounded by using the same method of $E_{\chi_1}$. 
We estimate the second term. Letting $b(x, y)=t^{n/2}a(t, x, y)\frac{\partial \chi}{\partial d}$  
\begin{align*}
&\frac{1}{(2\pi i)^{n/2}}\int_{M} 
a(t, x, y) \frac{\partial \chi}{\partial d} 
\frac{\partial \{e^{\frac{id(x, y)^2}{2t}}\} }{\partial d}   f(y)\; dy \\
&=\frac{1}{(2\pi i t)^{n/2}}\int_{M} 
b(x, y) \frac{id(x, y)}{t} e^{\frac{id(x, y)^2}{2t}} f(y)\; dy \\
&=\frac{1}{(2\pi i t)^{n/2}}\int_{M} 
\frac{b(x, y)}{d(x, y)} \frac{id(x, y)^2}{t} e^{\frac{id(x, y)^2}{2t}} f(y)\; dy \\
&=\frac{n-2}{(2\pi i t)^{n/2}}\int_{M} 
\frac{b(x, y)}{d(x, y)}e^{\frac{id(x, y)^2}{2t}} f(y)\; dy
+\frac{\partial}{\partial t}\left(\frac{1}{\pi i (2\pi i t)^{n/2-1}}\int_{M}\frac{b(x, y)}{d(x, y)}  e^{\frac{id(x, y)^2}{2t}}    f(y) \; dy \right) \\
&=\widetilde{E_{\chi_2}(t)} f(x)+\widehat{E_{\chi_2}(t)} f(x).
\end{align*}
$\frac{b(x, y)}{d(x, y)}=t^{n/2}\frac{a(t, x, y)}{d(x, y)}\frac{\partial \chi(d(x, y))}{\partial d(x, y)}$ is bounded. 
So we have 
\begin{equation} 
\Vert \widetilde{E_{\chi_2}(t)} f(x)\Vert_{L^2}\leqq C_4\Vert f \Vert_{L^2}
\end{equation} 
\begin{align*}
\Vert \int_0^t e^{\frac{-is \widehat{H}}{2}} \widehat{E_{\chi_2}(t)} f(x)\Vert  &
=\Vert \Big[e^{\frac{-is \widehat{H}}{2}} \left(\frac{1}{\pi i (2\pi i t)^{n/2-1}}\int_{M}\frac{b(x, y)}{d(x, y)}  
e^{\frac{id(x, y)^2}{2t}} f(y) \; dy \right) \Big]_0^t\Vert_{L^2} \\
& \quad +\Vert \left(\int_{0}^t \frac{-is \widehat{H}}{2} e^{\frac{-is \widehat{H}}{2}}  
\big\{ 
\frac{1}{\pi i (2\pi i t)^{n/2-1}}\int_{M}\frac{b(x, y)}{d(x, y)}  
e^{\frac{id(x, y)^2}{2t}} f(y) \; dy \big\} \; ds \right)\Vert_{L^2}
\\
& \leqq {C_5 t}\Vert f(x) \Vert_{L^2}+{C_6 t^2} \Vert (-\triangle+1) f(x) \Vert_{L^2} \tag{9}. 
\end{align*}
Summarizing (7), (8) and (9),  
\begin{align*}
\Vert (\int_{0}^t e^{\frac{-is \widehat{H}}{2}} 
\{ E_{\chi_1}(s)+E_{\chi_2}(s) \} f(x) ds) \Vert_{L^2}
\leqq 
C_1 t \Vert  f(x) \Vert_{L^2}  
+ {C_2 t^2} \Vert (-\triangle+1) f(x) \Vert_{L^2}
\end{align*} 
as desired. 
\end{proof}
It follows that $\{U_{\chi}(t/N)\}^N \rho(N)$ are uniformly bounded, 
and so the strong limit is obtained:  
\begin{main}[Time slicing products and the strong limits]
$$s\hspace{-1.5mm} \lim_{N\rightarrow \infty} \{U_{\chi}(t/N)\}^N \rho(N) f(x)
= e^{\frac{it}{2}(\triangle-\frac{R}{6})} f(x)
\quad \mbox{for}\; \forall \; f(x) \in L^2(M). $$
\end{main}
\begin{proof}
By Lemma 4.3, 
$\Vert U_{\chi}(t) \rho(E) f(x)\Vert \leqq \{1+C_1|t|+C_2 t^2 (E+1)\} \Vert f(x)\Vert_{L^2}$. 
Consequently 
$$
\Vert \{U_{\chi}(t/N)\}^N \rho(N)  f(x)\Vert 
\leqq {(1+C_1|t|/N+C_2(N+1)t^2/N^2)}^{N}\Vert f(x)\Vert_{L^2}<e^{C|t|}\Vert f(x)\Vert_{L^2}. 
$$ 
Letting $\widehat{H}=\triangle-\frac{R}{6}$, the estimates of Proposition 3.1 yields 
\begin{align*}
\lim_{N\rightarrow \infty} \Vert (e^{\frac{it\widehat{H}}{2}}-\{U_\chi(t/N)\}^N \rho(N) ) f(x) \Vert_{L^2}  
\leqq &
\lim_{N\rightarrow \infty} [\Vert e^{\frac{it\widehat{H}}{2}} 
(1- \rho({N^{1/\alpha-\varepsilon}}) ) f(x) \Vert_{L^2} \\
& + \Vert (e^{\frac{it\widehat{H}}{2}} -\{U_\chi(t/N)\}^N)\ \rho({N^{1/\alpha-\varepsilon}}) f(x) \Vert_{L^2} \\
&+\Vert \{U_\chi(t/N)\}^N\ (\rho(N) - \rho({N^{1/\alpha-\varepsilon}})) f(x) \Vert_{L^2}] \\
=&0. 
\end{align*}
\end{proof}
\setcounter{section}{4}
\setcounter{theorem}{1}
\section{Some remarks}
\begin{remark}
Main theorem holds true even for two-point homogeneous spaces. For a torus, $R=0$ and 
$$s\hspace{-1.5mm} \lim_{N\rightarrow \infty} \{U_{\chi}(t/N)\}^N \rho(N) f(x)
= e^{\frac{it}{2}\triangle} f(x)
\quad \mbox{for}\; \forall \; f(x) \in L^2(M). $$ 
\end{remark}
\begin{remark}
Since $M$ is compact, we need not to use 
Cotlar-Stein lemma (See e.g. \cite[p.238]{Fu 3}). 
\end{remark}
\begin{remark}
Our estimates hold in Sobolev spaces (See \S 2), that is  
$$
s\hspace{-1.5mm} \lim_{N\rightarrow \infty} \{U_{\chi}(t/N)\}^N 
\rho({N}) f(x) 
=e^{\frac{it}{2}(\triangle-\frac{R}{6})} f(x) \quad \mbox{in}\ H^k(M).
$$ 
The Sobolev imbedding theorem yields the uniformly convergence:
$$
\lim_{N\rightarrow \infty}\sup\limits_{x \in M}| [\{U_{\chi}(t/N)\}^N 
-e^{\frac{it}{2}(\triangle-\frac{R}{6})}] \rho({N}) f(x)|=0 
\quad \mbox{for}\ f(x)\in H^k(M) 
$$ 
where $k>\frac{n}{2}=\frac{\mbox{dim} M}{2}$. 
\end{remark}
\begin{remark}
We employed the shortest paths on $M$. 
$U_{\chi}(t)$ is defined by the action integrals and van Vleck determinants. 
van Vleck determinants diverge at conjugate points, thus we ignore the long paths. 
\par
On $S^1$, however, we can take infinite many long paths 
for Fresnel integrable functions. On $M$, can one construct the analogy ?
\end{remark}
\setcounter{section}{5}
\setcounter{theorem}{1}
\section{Conclusion}
\par
Simple WKB like formulas of Feynman integrations are discussed. 
Low energy approximations assure the unique classical paths. 
The quantum evolution is given by means of action integrals and van Vleck determinants. 
That is 
$\{U_{\chi}(t/N)\}^N \rho (N)$ converges to 
the modified Schr\"odinger operator in strong topology.  
We would like to mention about the case for general Riemannian manifolds in the future.  
\vspace{5mm}\par\noindent
{{\bf Acknowledgements}} 
\par\noindent
The author would like to thank the organizers 
of  ISQS23 for the kind invitation. 
The author also wishes to thank Professor A. Inoue and Professor N. Kumano-go 
for their valuable comments. 

\vspace{10mm}
\par 
Y.Miyanishi: Center for Mathematical Modeling and Data Science,  
Osaka University, 
\par\quad\quad\quad\quad\quad\quad
Machikaneyamacho 1-3, 
Toyonakashi 560-8531, Japan; 
\vspace{2mm}\par
e-mail: miyanishi@sigmath.es.osaka-u.ac.jp  
\end{document}